		\newcounter{theorem_c} % Unified coutner for all theorem environments...
		\numberwithin{theorem_c}{section} % ... numbered within sections
		\numberwithin{equation}{section} % Equations are also numbered within sections (but have a separate counter)
		\theoremstyle{plain} 
		\newtheorem{theorem}[theorem_c]{Theorem}
  \newtheorem{proposition}[theorem_c]{Proposition}
		\newtheorem{conjecture}[theorem_c]{Conjecture}
		\newtheoremstyle{exampstyle}
		  {2mm} % Space above
		  {2mm} % Space below
		  {\itshape} % Body font
		  {} % Indent amount
		  {\bfseries} % Theorem head font
		  {.} % Punctuation after theorem head
		  {.5em} % Space after theorem head
		  {} % Theorem head spec (can be left empty, meaning `normal')
		\theoremstyle{exampstyle}
		\newtheorem{definition}[theorem_c]{Definition}
	\newcommand{\integers}{\mathbb{Z}} % Set of interer numbers
	\newcommand{\complexs}{\mathbb{C}} % Set of complex numbers
		\newcommand{\ket}[1]{\vert #1 \rangle} % Ket labelled #1
		\newcommand{\bra}[1]{\langle #1 \vert} % Bra labelled #1
		\newcommand{\braket}[2]{\langle #1 \vert #2 \rangle} % Inner product of bra labelled #1 with ket labelled #2
		\newcommand{\isom}{\cong} % Isomorphism
		\newcommand{\id}[1]{id_{#1}} % Identity morphism of object #1
		\newcommand{\tensorUnit}{I} % Monoidal tensor unit
		\newcommand{\Hom}[3]{\operatorname{Hom}_{\,#1}\left[#2,#3\right]} % Set of morphisms in category #1 from object #2 to object #3
		\newcommand{\SetCategory}{\operatorname{Set}} % Category of sets
		\newcommand{\CMonCategory}{\operatorname{CMon}} % Category of commutative monoids
		\newcommand{\RMatCategory}[1]{#1\operatorname{-Mat}} % Category of finite sets and matrices valued in a semiring R 
		\newcommand{\RModCategory}[1]{#1\operatorname{-Mod}} % Category of R-modules (R field, ring, semiring...)
		\newcommand{\fHilbCategory}{\operatorname{fHilb}} % Category of finite-dimensional Hilbert spaces
		\newcommand{\RelCategory}{\operatorname{Rel}} % Category of sets and relations
		\newcommand{\fRelCategory}{\operatorname{fRel}} % Category of finite sets and relations
		\newcommand{\fSetCategory}{\operatorname{fSet}} % Category of finite sets and total functions
		\newcommand{\CategoryC}{\mathcal{C}}
		\newcommand{\obj}[1]{\operatorname{obj} \, #1} % Set of objects of category #1
	\newcommand{\hbox{\input{symbols/XdotSym.tex}}\!}{\hbox{\input{symbols/XdotSym.tex}}\!} % Dot
	\newcommand{\hbox{\input{symbols/timemultSym.tex}}\!}{\hbox{\input{symbols/timemultSym.tex}}\!} % Multiplication
	\newcommand{\hbox{\input{symbols/timecomultSym.tex}}\!}{\hbox{\input{symbols/timecomultSym.tex}}\!} % Comultiplication
	\newcommand{\hbox{\input{symbols/timeunitSym.tex}}\!}{\hbox{\input{symbols/timeunitSym.tex}}\!} % Unit
	\newcommand{\hbox{\input{symbols/timecounitSym.tex}}\!}{\hbox{\input{symbols/timecounitSym.tex}}\!} % Counit
	\newcommand{\hbox{\input{symbols/antipodeSym.tex}}\!}{\hbox{\input{symbols/antipodeSym.tex}}\!} % Antipode (group inverse)
	\newcommand{\hbox{\input{symbols/ZdotSym.tex}}\!}{\hbox{\input{symbols/ZdotSym.tex}}\!} % Dot
	\newcommand{\hbox{\input{symbols/timematchSym.tex}}\!}{\hbox{\input{symbols/timematchSym.tex}}\!} % Multiplication
	\newcommand{\hbox{\input{symbols/timediagSym.tex}}\!}{\hbox{\input{symbols/timediagSym.tex}}\!} % Comultiplication
	\newcommand{\hbox{\input{symbols/timematchunitSym.tex}}\!}{\hbox{\input{symbols/timematchunitSym.tex}}\!} % Unit
	\newcommand{\hbox{\input{symbols/trivialcharSym.tex}}\!}{\hbox{\input{symbols/trivialcharSym.tex}}\!} % Counit
	\newcommand{\hbox{\input{symbols/XaltdotSym.tex}}\!}{\hbox{\input{symbols/XaltdotSym.tex}}\!}
	\newcommand{\hbox{\input{symbols/internaltimemultSym.tex}}\!}{\hbox{\input{symbols/internaltimemultSym.tex}}\!} % Multiplication
	\newcommand{\hbox{\input{symbols/internaltimecomultSym.tex}}\!}{\hbox{\input{symbols/internaltimecomultSym.tex}}\!} % Comultiplication
	\newcommand{\hbox{\input{symbols/internaltimeunitSym.tex}}\!}{\hbox{\input{symbols/internaltimeunitSym.tex}}\!} % Unit
	\newcommand{\hbox{\input{symbols/internaltimecounitSym.tex}}\!}{\hbox{\input{symbols/internaltimecounitSym.tex}}\!} % Counit
	\newcommand{\hbox{\input{symbols/ZaltdotSym.tex}}\!}{\hbox{\input{symbols/ZaltdotSym.tex}}\!}
	\newcommand{\hbox{\input{symbols/internaltimematchSym.tex}}\!}{\hbox{\input{symbols/internaltimematchSym.tex}}\!} % Multiplication
	\newcommand{\hbox{\input{symbols/internaltimediagSym.tex}}\!}{\hbox{\input{symbols/internaltimediagSym.tex}}\!} % Comultiplication
	\newcommand{\hbox{\input{symbols/internaltimematchunitSym.tex}}\!}{\hbox{\input{symbols/internaltimematchunitSym.tex}}\!} % Unit
	\newcommand{\hbox{\input{symbols/internaltrivialcharSym.tex}}\!}{\hbox{\input{symbols/internaltrivialcharSym.tex}}\!} % Counit
	\newcommand{\hbox{\input{symbols/DdotSym.tex}}\!}{\hbox{\input{symbols/DdotSym.tex}}\!} % Dot	
	\newcommand{\hbox{\input{symbols/DcomultSym.tex}}\!}{\hbox{\input{symbols/DcomultSym.tex}}\!} % Comultiplication	
	\newcommand{\hbox{\input{symbols/DmultSym.tex}}\!}{\hbox{\input{symbols/DmultSym.tex}}\!} % Multiplication
	\newcommand{\hbox{\input{symbols/DcounitSym.tex}}\!}{\hbox{\input{symbols/DcounitSym.tex}}\!} % Counit
	\newcommand{\hbox{\input{symbols/DunitSym.tex}}\!}{\hbox{\input{symbols/DunitSym.tex}}\!} % Unit
	\newcommand{\Xbwcolour}{black!80}
	\newcommand{\hbox{\input{symbols/DdotSym.tex}}\!\!}{\hbox{\input{symbols/DdotSym.tex}}\!\!} % Dot	
	\newcommand{\!\hbox{\input{symbols/DcomultSym.tex}}\!\!}{\!\hbox{\input{symbols/DcomultSym.tex}}\!\!} % Comultiplication	
	\newcommand{\!\hbox{\input{symbols/DmultSym.tex}}\!\!}{\!\hbox{\input{symbols/DmultSym.tex}}\!\!} % Multiplication
	\newcommand{\!\hbox{\input{symbols/DcounitSym.tex}}\!\!}{\!\hbox{\input{symbols/DcounitSym.tex}}\!\!} % Counit
	\newcommand{\!\hbox{\input{symbols/DunitSym.tex}}\!\!}{\!\hbox{\input{symbols/DunitSym.tex}}\!\!} % Unit
	\newcommand{\!\hbox{\input{symbols/DrightcounitLawSym.tex}}\!\!}{\!\hbox{\input{symbols/DrightcounitLawSym.tex}}\!\!} % Right Counit Law
	\newcommand{\!\hbox{\input{symbols/DspecialtyLawSym.tex}}\!\!}{\!\hbox{\input{symbols/DspecialtyLawSym.tex}}\!\!} % Specialty Law
	\newcommand{\Zbwcolour}{white}
	\newcommand{\hbox{\input{symbols/ZbwdotSym.tex}}\!\!}{\hbox{\input{symbols/ZbwdotSym.tex}}\!\!} % Dot	
	\newcommand{\!\hbox{\input{symbols/ZbwcomultSym.tex}}\!\!}{\!\hbox{\input{symbols/ZbwcomultSym.tex}}\!\!} % Comultiplication	
	\newcommand{\!\hbox{\input{symbols/ZbwmultSym.tex}}\!\!}{\!\hbox{\input{symbols/ZbwmultSym.tex}}\!\!} % Multiplication	
	\newcommand{\!\hbox{\input{symbols/ZbwcounitSym.tex}}\!\!}{\!\hbox{\input{symbols/ZbwcounitSym.tex}}\!\!} % Counit
	\newcommand{\!\hbox{\input{symbols/ZbwunitSym.tex}}\!\!}{\!\hbox{\input{symbols/ZbwunitSym.tex}}\!\!} % Unit
	\newcommand{\hbox{\input{symbols/ZbwleftDecohSym.tex}}\!}{\hbox{\input{symbols/ZbwleftDecohSym.tex}}\!} % Left decoherence	
	\newcommand{\hbox{\input{symbols/ZbwrightDecohSym.tex}}\!}{\hbox{\input{symbols/ZbwrightDecohSym.tex}}\!} % Right decoherence	
	\newcommand{\!\hbox{\input{symbols/ZbwunitSqNormSym.tex}}\!\!}{\!\hbox{\input{symbols/ZbwunitSqNormSym.tex}}\!\!} % Squred norm of the unit
	\newcommand{\Ybwcolour}{black!15}
	\newcommand{\hbox{\input{symbols/YbwdotSym.tex}}\!\!}{\hbox{\input{symbols/YbwdotSym.tex}}\!\!} % Dot	
	\newcommand{\!\hbox{\input{symbols/YbwcomultSym.tex}}\!\!}{\!\hbox{\input{symbols/YbwcomultSym.tex}}\!\!} % Comultiplication	
	\newcommand{\!\hbox{\input{symbols/YbwmultSym.tex}}\!\!}{\!\hbox{\input{symbols/YbwmultSym.tex}}\!\!} % Multiplication	
	\newcommand{\!\hbox{\input{symbols/YbwcounitSym.tex}}\!\!}{\!\hbox{\input{symbols/YbwcounitSym.tex}}\!\!} % Counit
	\newcommand{\!\hbox{\input{symbols/YbwunitSym.tex}}\!\!}{\!\hbox{\input{symbols/YbwunitSym.tex}}\!\!} % Unit
	\newcommand{\Wbwcolour}{black!50}
	\newcommand{\hbox{\input{symbols/WbwdotSym.tex}}\!}{\hbox{\input{symbols/WbwdotSym.tex}}\!} % Dot	
	\newcommand{\hbox{\input{symbols/WbwcomultSym.tex}}\!}{\hbox{\input{symbols/WbwcomultSym.tex}}\!} % Comultiplication	
	\newcommand{\hbox{\input{symbols/WbwmultSym.tex}}\!}{\hbox{\input{symbols/WbwmultSym.tex}}\!} % Multiplication	
	\newcommand{\hbox{\input{symbols/WbwcounitSym.tex}}\!}{\hbox{\input{symbols/WbwcounitSym.tex}}\!} % Counit
	\newcommand{\hbox{\input{symbols/WbwunitSym.tex}}\!}{\hbox{\input{symbols/WbwunitSym.tex}}\!} % Unit
	\newcommand{\hbox{\input{symbols/traceSym.tex}}\!}{\hbox{\input{symbols/traceSym.tex}}\!} % Trace symbol
	\newcommand{\hbox{\input{symbols/smallTraceSym.tex}}\!}{\hbox{\input{symbols/smallTraceSym.tex}}\!} % Small trace symbol for subscripts
	\newcommand{\hbox{\input{symbols/cotraceSym.tex}}\!}{\hbox{\input{symbols/cotraceSym.tex}}\!} % Co-trace symbol
	\newcommand{\hbox{\input{symbols/algebraSym.tex}}\!\!}{\hbox{\input{symbols/algebraSym.tex}}\!\!}	% Gray algebra symbol (2 inputs, 1 output)
	\newcommand{\hbox{\input{symbols/measurementSym.tex}}\!\!}{\hbox{\input{symbols/measurementSym.tex}}\!\!} % The adjoint of the algebra symbol
	\newcommand{\hbox{\input{symbols/repSym.tex}}\!\!}{\hbox{\input{symbols/repSym.tex}}\!\!} % Internalised representation G -> H x H*
	\newcommand{\hbox{\input{symbols/mapSym.tex}}\!\!}{\hbox{\input{symbols/mapSym.tex}}\!\!} % The same as the measurement symbol, but with white bg.
	\newcommand{\hbox{\input{symbols/mapconjSym.tex}}\!\!}{\hbox{\input{symbols/mapconjSym.tex}}\!\!} % The conjugate of the map symbol.
	\tikzset{
	  rectangle with rounded corners north west/.initial=4pt,
	  rectangle with rounded corners south west/.initial=4pt,
	  rectangle with rounded corners north east/.initial=4pt,
	  rectangle with rounded corners south east/.initial=4pt,
	}
	\tikzset{->-/.style={decoration={markings,mark=at position #1 with {\arrow{>}}},postaction={decorate}}}
	\tikzset{-<-/.style={decoration={markings,mark=at position #1 with {\arrow{<}}},postaction={decorate}}}
	\tikzstyle{every picture}=[baseline=-0.25em,scale=0.5]
	\tikzstyle{box} = [draw,shape=rectangle,inner sep=2pt,minimum height=6mm,minimum width=6mm,fill=white] 
	\tikzstyle{boxlarge} = [draw,shape=rectangle,inner sep=2pt,minimum height=1.5cm,minimum width=8mm,fill=white] 
	\tikzstyle{boxLarge} = [draw,shape=rectangle,inner sep=2pt,minimum height=2cm,minimum width=10mm,fill=white] 
	\tikzstyle{boxsmall} = [draw,shape=rectangle,inner sep=2pt,minimum height=3mm,minimum width=3mm,fill=white] % small in all directions. Might one day use boxnarrow for small in the largeness direction only.
	\tikzstyle{dot} = [inner sep=0mm,minimum width=3mm,minimum height=3mm,draw,shape=circle,text depth=-0.1mm]
	\tikzstyle{Zbwdot} = [dot, fill=\Zbwcolour]
	\tikzstyle{Xbwdot} = [dot, fill=\Xbwcolour]
	\tikzstyle{Ybwdot} = [dot, fill=\Ybwcolour]
	\tikzstyle{Wbwdot} = [dot, fill=\Wbwcolour]
	\tikzstyle{antipode} = [boxsmall] 
	\tikzstyle{state} = [draw, rectangle with rounded corners,
	\tikzstyle{statelarge} = [draw, rectangle with rounded corners,
	\tikzstyle{stateLarge} = [draw, rectangle with rounded corners,
	\tikzstyle{effect} = [draw, rectangle with rounded corners,
	\tikzstyle{scalar}=[diamond,draw,inner sep=1pt,font=\small,fill=white]
	\tikzstyle{cdnode}=[fill=white]
	\tikzstyle{labelnode}=[fill=white]
	\tikzstyle{tightlabelnode}=[fill=white,inner sep = 0.1mm]
	\tikzstyle{none}=[inner sep=0pt]
	\tikzstyle{whiteline}=[-, line width=4pt, draw=white]
	\tikzstyle{trace}=[circuit ee IEC,thick,ground,scale=2.5]
	\tikzstyle{cotrace}=[circuit ee IEC,thick,ground,rotate=180,scale=2.5]
	\tikzstyle{upground}=[circuit ee IEC,thick,ground,rotate=90,scale=2.5]
	\tikzstyle{downground}=[circuit ee IEC,thick,ground,rotate=-90,scale=2.5]
	\tikzstyle{doubled} = [line width=1.8pt] % [line width=1.6pt] % [very thick]
	\tikzstyle{empty diagram}=[draw=gray!40!white,dashed,shape=rectangle,minimum width=1cm,minimum height=1cm]
  \newcommand{\showoptional}{1}
  \newcommand{\ismain}{0}
\tikzstyle{env}=[copoint,regular polygon rotate=0,minimum width=0.2cm, fill=black]
\tikzstyle{probs}=[shape=semicircle,fill=white,draw=black,shape border rotate=180,minimum width=1.2cm]
\tikzstyle{every picture}=[baseline=-0.25em,scale=0.5]
\tikzstyle{dotpic}=[] % for backwards-compatibility
\tikzstyle{diredges}=[every to/.style={diredge}]
\tikzstyle{math matrix}=[matrix of math nodes,left delimiter=(,right delimiter=),inner sep=2pt,column sep=1em,row sep=0.5em,nodes={inner sep=0pt},text height=1.5ex, text depth=0.25ex]
\tikzstyle{inline text}=[text height=1.5ex, text depth=0.25ex,yshift=0.5mm]
\tikzstyle{label}=[font=\footnotesize,text height=1.5ex, text depth=0.25ex,yshift=0.5mm]
\tikzstyle{left label}=[label,anchor=east,xshift=1.5mm]
\tikzstyle{right label}=[label,anchor=west,xshift=-1.5mm]
\tikzstyle{braceedge}=[decorate,decoration={brace,amplitude=2mm,raise=-1mm}]
\tikzstyle{small braceedge}=[decorate,decoration={brace,amplitude=1mm,raise=-1mm}]
\tikzstyle{doubled}=[line width=1.6pt] % set the line width for all doubled (quantum) maps/wires
\tikzstyle{boldedge}=[doubled,shorten <=-0.17mm,shorten >=-0.17mm]
\tikzstyle{boldedgegray}=[doubled,gray,shorten <=-0.17mm,shorten >=-0.17mm]
\tikzstyle{singleedgegray}=[gray]%,shorten <=-0.1mm,shorten >=-0.1mm]
\tikzstyle{semidoubled}=[line width=1.4pt] % set the line width for all doubled (quantum) maps/wires
\tikzstyle{semiboldedgegray}=[semidoubled,gray,shorten <=-0.17mm,shorten >=-0.17mm]
\tikzstyle{boxedge}=[semiboldedgegray]
\tikzstyle{boldedgedashed}=[very thick,dashed,shorten <=-0.17mm,shorten >=-0.17mm]
\tikzstyle{vboldedgedashed}=[doubled,dashed,shorten <=-0.17mm,shorten >=-0.17mm]
\tikzstyle{left hook arrow}=[left hook-latex]
\tikzstyle{right hook arrow}=[right hook-latex]
\tikzstyle{sembracket}=[line width=0.5pt,shorten <=-0.07mm,shorten >=-0.07mm]
\tikzstyle{causal edge}=[->,thick,gray]
\tikzstyle{causal nondir}=[thick,gray]
\tikzstyle{timeline}=[thick,gray, dashed]
\tikzstyle{cedge}=[<->,thick,gray!70!white]
\tikzstyle{empty diagram}=[draw=gray!40!white,dashed,shape=rectangle,minimum width=1cm,minimum height=1cm]
\tikzstyle{empty diagram small}=[draw=gray!50!white,dashed,shape=rectangle,minimum width=0.6cm,minimum height=0.5cm]
\tikzstyle{dot}=[inner sep=0mm,minimum width=2mm,minimum height=2mm,draw,shape=circle]  
\tikzstyle{Wsquare}=[white dot, shape=regular polygon, rounded corners=0.8 mm, minimum size=3.3 mm, regular polygon sides=3, outer sep=-0.2mm]
\tikzstyle{Wsquareadj}=[white dot, shape=regular polygon, rounded corners=0.8 mm, minimum size=3.3 mm, regular polygon sides=3, outer sep=-0.2mm, regular polygon rotate=180]
\tikzstyle{ddot}=[inner sep=0mm, doubled, minimum width=2.5mm,minimum height=2.5mm,draw,shape=circle]
\tikzstyle{black dot}=[dot,fill=black]
\tikzstyle{white dot}=[dot,fill=white,,text depth=-0.2mm]
\tikzstyle{white Wsquare}=[Wsquare,fill=gray,,text depth=-0.2mm]
\tikzstyle{white Wsquareadj}=[Wsquareadj,fill=white,,text depth=-0.2mm]
\tikzstyle{green dot}=[white dot] % for backwards-compatibility
\tikzstyle{gray dot}=[dot,fill=gray!40!white,,text depth=-0.2mm]
\tikzstyle{red dot}=[gray dot] % for backwards-compatibility
\tikzstyle{black ddot}=[ddot,fill=black]
\tikzstyle{white ddot}=[ddot,fill=white]
\tikzstyle{gray ddot}=[ddot,fill=gray!40!white]
\tikzstyle{gray edge}=[gray!60!white]
\tikzstyle{small dot}=[inner sep=0.5mm,minimum width=0pt,minimum height=0pt,draw,shape=circle]
\tikzstyle{small black dot}=[small dot,fill=black]
\tikzstyle{small white dot}=[small dot,fill=white]
\tikzstyle{small gray dot}=[small dot,fill=gray!40!white]
\tikzstyle{causal dot}=[inner sep=0.4mm,minimum width=0pt,minimum height=0pt,draw=white,shape=circle,fill=gray!40!white]
\tikzstyle{phase dimensions}=[minimum size=5mm,font=\footnotesize,rectangle,rounded corners=2.5mm,inner sep=0.2mm,outer sep=-2mm]
\tikzstyle{dphase dimensions}=[minimum size=5mm,font=\footnotesize,rectangle,rounded corners=2.5mm,inner sep=0.2mm,outer sep=-2mm]
\tikzstyle{white phase dot}=[dot,fill=white,phase dimensions]
\tikzstyle{white phase ddot}=[ddot,fill=white,dphase dimensions]
\tikzstyle{white rect ddot}=[draw=black,fill=white,doubled,minimum size=5mm,font=\footnotesize,rectangle,rounded corners=2.5mm,inner sep=0.2mm]
\tikzstyle{gray rect ddot}=[draw=black,fill=gray!40!white,doubled,minimum size=6mm,font=\footnotesize,rectangle,rounded corners=3mm]
\tikzstyle{gray phase dot}=[dot,fill=gray!40!white,phase dimensions]
\tikzstyle{gray phase ddot}=[ddot,fill=gray!40!white,dphase dimensions]
\tikzstyle{grey phase dot}=[gray phase dot]
\tikzstyle{grey phase ddot}=[gray phase ddot]
\tikzstyle{small phase dimensions}=[minimum size=4mm,font=\tiny,rectangle,rounded corners=2mm,inner sep=0.2mm,outer sep=-2mm]
\tikzstyle{small dphase dimensions}=[minimum size=4mm,font=\tiny,rectangle,rounded corners=2mm,inner sep=0.2mm,outer sep=-2mm]
\tikzstyle{small gray phase dot}=[dot,fill=gray!40!white,small phase dimensions]
\tikzstyle{small gray phase ddot}=[ddot,fill=gray!40!white,small dphase dimensions]
\tikzstyle{small map}=[draw,shape=rectangle,minimum height=4mm,minimum width=4mm,fill=white]
\tikzstyle{cnot}=[fill=white,shape=circle,inner sep=-1.4pt]
\tikzstyle{asym hadamard}=[fill=white,draw,shape=NEbox,inner sep=0.6mm,font=\footnotesize,minimum height=4mm]
\tikzstyle{asym hadamard conj}=[fill=white,draw,shape=NWbox,inner sep=0.6mm,font=\footnotesize,minimum height=4mm]
\tikzstyle{asym hadamard dag}=[fill=white,draw,shape=SEbox,inner sep=0.6mm,font=\footnotesize,minimum height=4mm]
\tikzstyle{hadamard}=[fill=white,draw,inner sep=0.6mm,font=\footnotesize,minimum height=4mm,minimum width=4mm]
\tikzstyle{small hadamard}=[fill=white,draw,inner sep=0.6mm,minimum height=1.5mm,minimum width=1.5mm]
\tikzstyle{small hadamard rotate}=[small hadamard,rotate=45]
\tikzstyle{dhadamard}=[hadamard,doubled]
\tikzstyle{small dhadamard}=[small hadamard,doubled]
\tikzstyle{small dhadamard rotate}=[small hadamard rotate,doubled]
\tikzstyle{antipode}=[white dot,inner sep=0.3mm,font=\footnotesize]
\tikzstyle{scalar}=[diamond,draw,inner sep=0.5pt,font=\small]
\tikzstyle{dscalar}=[diamond,doubled, draw,inner sep=0.5pt,font=\small]
\tikzstyle{small box}=[rectangle,inline text,fill=white,draw,minimum height=5mm,yshift=-0.5mm,minimum width=5mm,font=\small]
\tikzstyle{small gray box}=[small box,fill=gray!30]
\tikzstyle{medium box}=[rectangle,inline text,fill=white,draw,minimum height=5mm,yshift=-0.5mm,minimum width=10mm,font=\small]
\tikzstyle{square box}=[small box] % for backwards-compatibility
\tikzstyle{medium gray box}=[small box,fill=gray!30]
\tikzstyle{semilarge box}=[rectangle,inline text,fill=white,draw,minimum height=5mm,yshift=-0.5mm,minimum width=12.5mm,font=\small]
\tikzstyle{large box}=[rectangle,inline text,fill=white,draw,minimum height=5mm,yshift=-0.5mm,minimum width=15mm,font=\small]
\tikzstyle{large gray box}=[small box,fill=gray!30]
\tikzstyle{Bayes box}=[rectangle,fill=black,draw, minimum height=3mm, minimum width=3mm]
\tikzstyle{gray square point}=[small box,fill=gray!50]
\tikzstyle{dphase box white}=[dhadamard]
\tikzstyle{dphase box gray}=[dhadamard,fill=gray!50!white]
\tikzstyle{phase box white}=[hadamard]
\tikzstyle{phase box gray}=[hadamard,fill=gray!50!white]
\tikzstyle{point}=[regular polygon,regular polygon sides=3,draw,scale=0.75,inner sep=-0.5pt,minimum width=9mm,fill=white,regular polygon rotate=180]
\tikzstyle{point nosep}=[regular polygon,regular polygon sides=3,draw,scale=0.75,inner sep=-2pt,minimum width=9mm,fill=white,regular polygon rotate=180]
\tikzstyle{copoint}=[regular polygon,regular polygon sides=3,draw,scale=0.75,inner sep=-0.5pt,minimum width=9mm,fill=white]
\tikzstyle{dpoint}=[point,doubled]
\tikzstyle{dcopoint}=[copoint,doubled]
\tikzstyle{pointgrow}=[shape=cornerpoint,kpoint common,scale=0.75,inner sep=3pt]
\tikzstyle{pointgrow dag}=[shape=cornercopoint,kpoint common,scale=0.75,inner sep=3pt]
\tikzstyle{wide copoint}=[fill=white,draw,shape=isosceles triangle,shape border rotate=90,isosceles triangle stretches=true,inner sep=0pt,minimum width=1.5cm,minimum height=6.12mm]
\tikzstyle{wide point}=[fill=white,draw,shape=isosceles triangle,shape border rotate=-90,isosceles triangle stretches=true,inner sep=0pt,minimum width=1.5cm,minimum height=6.12mm,yshift=-0.0mm]
\tikzstyle{wide point plus}=[fill=white,draw,shape=isosceles triangle,shape border rotate=-90,isosceles triangle stretches=true,inner sep=0pt,minimum width=1.74cm,minimum height=7mm,yshift=-0.0mm]
\tikzstyle{wide dpoint}=[fill=white,doubled,draw,shape=isosceles triangle,shape border rotate=-90,isosceles triangle stretches=true,inner sep=0pt,minimum width=1.5cm,minimum height=6.12mm,yshift=-0.0mm]
\tikzstyle{tinypoint}=[regular polygon,regular polygon sides=3,draw,scale=0.55,inner sep=-0.15pt,minimum width=6mm,fill=white,regular polygon rotate=180] 
\tikzstyle{white point}=[point]
\tikzstyle{white dpoint}=[dpoint]
\tikzstyle{green point}=[white point] % for backwards-compatibility
\tikzstyle{white copoint}=[copoint]
\tikzstyle{gray point}=[point,fill=gray!40!white]
\tikzstyle{gray dpoint}=[gray point,doubled]
\tikzstyle{red point}=[gray point] % for backwards-compatibility
\tikzstyle{gray copoint}=[copoint,fill=gray!40!white]
\tikzstyle{gray dcopoint}=[gray copoint,doubled]
\tikzstyle{white point guide}=[regular polygon,regular polygon sides=3,font=\scriptsize,draw,scale=0.65,inner sep=-0.5pt,minimum width=9mm,fill=white,regular polygon rotate=180]
\tikzstyle{black point}=[point,fill=black,font=\color{white}]
\tikzstyle{black copoint}=[copoint,fill=black,font=\color{white}]
\tikzstyle{tiny gray point}=[tinypoint,fill=gray!40!white]
\tikzstyle{diredge}=[->]
\tikzstyle{ddiredge}=[<->]
\tikzstyle{rdiredge}=[<-]
\tikzstyle{thickdiredge}=[->, very thick]
\tikzstyle{pointer edge}=[->,very thick,gray]
\tikzstyle{pointer edge part}=[very thick,gray]
\tikzstyle{dashed edge}=[dashed]
\tikzstyle{thick dashed edge}=[very thick,dashed]
\tikzstyle{thick gray dashed edge}=[thick dashed edge,gray!40]
\tikzstyle{thick map edge}=[very thick,|->]
\newcommand{\boxshape}[3]{%
\pgfdeclareshape{#1}{
\inheritsavedanchors[from=rectangle] % this is nearly a rectangle
\inheritanchorborder[from=rectangle]
\inheritanchor[from=rectangle]{center}
\inheritanchor[from=rectangle]{north}
\inheritanchor[from=rectangle]{south}
\inheritanchor[from=rectangle]{west}
\inheritanchor[from=rectangle]{east}
% ... and possibly more
\backgroundpath{% this is new
% store lower right in xa/ya and upper right in xb/yb
\southwest \pgf@xa=\pgf@x \pgf@ya=\pgf@y
\northeast \pgf@xb=\pgf@x \pgf@yb=\pgf@y

\@tempdima=#2
\@tempdimb=#3

\pgfpathmoveto{\pgfpoint{\pgf@xa - 5pt + \@tempdima}{\pgf@ya}}
\pgfpathlineto{\pgfpoint{\pgf@xa - 5pt - \@tempdima}{\pgf@yb}}
\pgfpathlineto{\pgfpoint{\pgf@xb + 5pt + \@tempdimb}{\pgf@yb}}
\pgfpathlineto{\pgfpoint{\pgf@xb + 5pt - \@tempdimb}{\pgf@ya}}
\pgfpathlineto{\pgfpoint{\pgf@xa - 5pt + \@tempdima}{\pgf@ya}}
\pgfpathclose
}
}}
\tikzstyle{cloud}=[shape=cloud,draw,minimum width=1.5cm,minimum height=1.5cm]
\tikzstyle{map}=[draw,shape=NEbox,inner sep=2pt,minimum height=6mm,fill=white]
\tikzstyle{dashedmap}=[draw,dashed,shape=NEbox,inner sep=2pt,minimum height=6mm,fill=white]
\tikzstyle{mapdag}=[draw,shape=SEbox,inner sep=2pt,minimum height=6mm,fill=white]
\tikzstyle{mapadj}=[draw,shape=SEbox,inner sep=2pt,minimum height=6mm,fill=white]
\tikzstyle{maptrans}=[draw,shape=SWbox,inner sep=2pt,minimum height=6mm,fill=white]
\tikzstyle{mapconj}=[draw,shape=NWbox,inner sep=2pt,minimum height=6mm,fill=white]
\tikzstyle{medium map}=[draw,shape=NEbox,inner sep=2pt,minimum height=6mm,fill=white,minimum width=7mm]
\tikzstyle{medium map dag}=[draw,shape=SEbox,inner sep=2pt,minimum height=6mm,fill=white,minimum width=7mm]
\tikzstyle{medium map adj}=[draw,shape=SEbox,inner sep=2pt,minimum height=6mm,fill=white,minimum width=7mm]
\tikzstyle{medium map trans}=[draw,shape=SWbox,inner sep=2pt,minimum height=6mm,fill=white,minimum width=7mm]
\tikzstyle{medium map conj}=[draw,shape=NWbox,inner sep=2pt,minimum height=6mm,fill=white,minimum width=7mm]
\tikzstyle{semilarge map}=[draw,shape=NEbox,inner sep=2pt,minimum height=6mm,fill=white,minimum width=9.5mm]
\tikzstyle{semilarge map trans}=[draw,shape=SWbox,inner sep=2pt,minimum height=6mm,fill=white,minimum width=9.5mm]
\tikzstyle{semilarge map adj}=[draw,shape=SEbox,inner sep=2pt,minimum height=6mm,fill=white,minimum width=9.5mm]
\tikzstyle{semilarge map dag}=[draw,shape=SEbox,inner sep=2pt,minimum height=6mm,fill=white,minimum width=9.5mm]
\tikzstyle{semilarge map conj}=[draw,shape=NWbox,inner sep=2pt,minimum height=6mm,fill=white,minimum width=9.5mm]
\tikzstyle{large map}=[draw,shape=NEbox,inner sep=2pt,minimum height=6mm,fill=white,minimum width=12mm]
\tikzstyle{large map conj}=[draw,shape=NWbox,inner sep=2pt,minimum height=6mm,fill=white,minimum width=12mm]
\tikzstyle{very large map}=[draw,shape=NEbox,inner sep=2pt,minimum height=6mm,fill=white,minimum width=17mm]
\tikzstyle{medium dmap}=[draw,doubled,shape=NEbox,inner sep=2pt,minimum height=6mm,fill=white,minimum width=7mm]
\tikzstyle{medium dmap dag}=[draw,doubled,shape=SEbox,inner sep=2pt,minimum height=6mm,fill=white,minimum width=7mm]
\tikzstyle{medium dmap adj}=[draw,doubled,shape=SEbox,inner sep=2pt,minimum height=6mm,fill=white,minimum width=7mm]
\tikzstyle{medium dmap trans}=[draw,doubled,shape=SWbox,inner sep=2pt,minimum height=6mm,fill=white,minimum width=7mm]
\tikzstyle{medium dmap conj}=[draw,doubled,shape=NWbox,inner sep=2pt,minimum height=6mm,fill=white,minimum width=7mm]
\tikzstyle{semilarge dmap}=[draw,doubled,shape=NEbox,inner sep=2pt,minimum height=6mm,fill=white,minimum width=9.5mm]
\tikzstyle{semilarge dmap trans}=[draw,doubled,shape=SWbox,inner sep=2pt,minimum height=6mm,fill=white,minimum width=9.5mm]
\tikzstyle{semilarge dmap adj}=[draw,doubled,shape=SEbox,inner sep=2pt,minimum height=6mm,fill=white,minimum width=9.5mm]
\tikzstyle{semilarge dmap dag}=[draw,doubled,shape=SEbox,inner sep=2pt,minimum height=6mm,fill=white,minimum width=9.5mm]
\tikzstyle{semilarge dmap conj}=[draw,doubled,shape=NWbox,inner sep=2pt,minimum height=6mm,fill=white,minimum width=9.5mm]
\tikzstyle{large dmap}=[draw,doubled,shape=NEbox,inner sep=2pt,minimum height=6mm,fill=white,minimum width=12mm]
\tikzstyle{large dmap conj}=[draw,doubled,shape=NWbox,inner sep=2pt,minimum height=6mm,fill=white,minimum width=12mm]
\tikzstyle{large dmap trans}=[draw,doubled,shape=SWbox,inner sep=2pt,minimum height=6mm,fill=white,minimum width=12mm]
\tikzstyle{large dmap adj}=[draw,doubled,shape=SEbox,inner sep=2pt,minimum height=6mm,fill=white,minimum width=12mm]
\tikzstyle{large dmap dag}=[draw,doubled,shape=SEbox,inner sep=2pt,minimum height=6mm,fill=white,minimum width=12mm]
\tikzstyle{very large dmap}=[draw,doubled,shape=NEbox,inner sep=2pt,minimum height=6mm,fill=white,minimum width=19.5mm]
\tikzstyle{muxbox}=[draw,shape=rectangle,minimum height=3mm,minimum width=3mm,fill=white]
\tikzstyle{dmuxbox}=[muxbox,doubled]
\tikzstyle{box}=[draw,shape=rectangle,inner sep=2pt,minimum height=6mm,minimum width=6mm,fill=white]
\tikzstyle{dbox}=[draw,doubled,shape=rectangle,inner sep=2pt,minimum height=6mm,minimum width=6mm,fill=white]
\tikzstyle{dmap}=[draw,doubled,shape=NEbox,inner sep=2pt,minimum height=6mm,fill=white]
\tikzstyle{dmapdag}=[draw,doubled,shape=SEbox,inner sep=2pt,minimum height=6mm,fill=white]
\tikzstyle{dmapadj}=[draw,doubled,shape=SEbox,inner sep=2pt,minimum height=6mm,fill=white]
\tikzstyle{dmaptrans}=[draw,doubled,shape=SWbox,inner sep=2pt,minimum height=6mm,fill=white]
\tikzstyle{dmapconj}=[draw,doubled,shape=NWbox,inner sep=2pt,minimum height=6mm,fill=white]
\tikzstyle{ddmap}=[draw,doubled,dashed,shape=NEbox,inner sep=2pt,minimum height=6mm,fill=white]
\tikzstyle{ddmapdag}=[draw,doubled,dashed,shape=SEbox,inner sep=2pt,minimum height=6mm,fill=white]
\tikzstyle{ddmapadj}=[draw,doubled,dashed,shape=SEbox,inner sep=2pt,minimum height=6mm,fill=white]
\tikzstyle{ddmaptrans}=[draw,doubled,dashed,shape=SWbox,inner sep=2pt,minimum height=6mm,fill=white]
\tikzstyle{ddmapconj}=[draw,doubled,dashed,shape=NWbox,inner sep=2pt,minimum height=6mm,fill=white]
\tikzstyle{smap}=[draw,shape=sNEbox,fill=white]
\tikzstyle{smapdag}=[draw,shape=sSEbox,fill=white]
\tikzstyle{smapadj}=[draw,shape=sSEbox,fill=white]
\tikzstyle{smaptrans}=[draw,shape=sSWbox,fill=white]
\tikzstyle{smapconj}=[draw,shape=sNWbox,fill=white]
\tikzstyle{dsmap}=[draw,dashed,shape=sNEbox,fill=white]
\tikzstyle{dsmapdag}=[draw,dashed,shape=sSEbox,fill=white]
\tikzstyle{dsmaptrans}=[draw,dashed,shape=sSWbox,fill=white]
\tikzstyle{dsmapconj}=[draw,dashed,shape=sNWbox,fill=white]
\tikzstyle{mmap}=[draw,shape=mNEbox]
\tikzstyle{mmapdag}=[draw,shape=mSEbox]
\tikzstyle{mmaptrans}=[draw,shape=mSWbox]
\tikzstyle{mmapconj}=[draw,shape=mNWbox]
\tikzstyle{mmapgray}=[draw,fill=gray!40!white,shape=mNEbox]
\tikzstyle{smapgray}=[draw,fill=gray!40!white,shape=sNEbox]
\pgfmathsetmacro{\pgf@shorten@left}{\pgfkeysvalueof{/tikz/shorten left}}
\pgfmathsetmacro{\pgf@shorten@right}{\pgfkeysvalueof{/tikz/shorten right}}
\pgfmathsetmacro{\pgf@shorten@left}{\pgfkeysvalueof{/tikz/shorten left}}
\pgfmathsetmacro{\pgf@shorten@right}{\pgfkeysvalueof{/tikz/shorten right}}
\tikzstyle{kpoint common}=[draw,fill=white,inner sep=1pt,minimum height=4mm]
\tikzstyle{kpoint sc}=[shape=cornerpoint,kpoint common]
\tikzstyle{kpoint adjoint sc}=[shape=cornercopoint,kpoint common]
\tikzstyle{kpoint}=[shape=cornerpoint,shorten left=5pt,kpoint common]
\tikzstyle{kpoint adjoint}=[shape=cornercopoint,shorten left=5pt,kpoint common]
\tikzstyle{kpoint conjugate}=[shape=cornerpoint,shorten right=5pt,kpoint common]
\tikzstyle{kpoint transpose}=[shape=cornercopoint,shorten right=5pt,kpoint common]
\tikzstyle{kpoint symm}=[shape=cornerpoint,shorten left=5pt,shorten right=5pt,kpoint common]
\tikzstyle{wide kpoint sc}=[shape=cornerpoint,kpoint common, minimum width=1 cm]
\tikzstyle{wide kpointdag sc}=[shape=cornercopoint,kpoint common, minimum width=1 cm]
\tikzstyle{black kpoint}=[shape=cornerpoint,shorten left=5pt,kpoint common,fill=black,font=\color{white}]
\tikzstyle{black kpoint sm}=[shape=cornerpoint,shorten left=5pt,kpoint common,fill=black,font=\color{white},scale=0.75]
\tikzstyle{black kpoint adjoint}=[shape=cornercopoint,shorten left=5pt,kpoint common,fill=black,font=\color{white}]
\tikzstyle{black kpointadj}=[shape=cornercopoint,shorten left=5pt,kpoint common,fill=black,font=\color{white}]
\tikzstyle{black kpointadj sm}=[shape=cornercopoint,shorten left=5pt,kpoint common,fill=black,font=\color{white},scale=0.75]
\tikzstyle{black dkpoint}=[shape=cornerpoint,shorten left=5pt,kpoint common,fill=black, doubled,font=\color{white}]
\tikzstyle{black dkpoint adjoint}=[shape=cornercopoint,shorten left=5pt,kpoint common,fill=black, doubled,font=\color{white}]
\tikzstyle{black dkpointadj}=[shape=cornercopoint,shorten left=5pt,kpoint common,fill=black, doubled,font=\color{white}]
\tikzstyle{black dkpoint sm}=[shape=cornerpoint,shorten left=5pt,kpoint common,fill=black, doubled,font=\color{white},scale=0.75]
\tikzstyle{black dkpointadj sm}=[shape=cornercopoint,shorten left=5pt,kpoint common,fill=black, doubled,font=\color{white},scale=0.75] 
\tikzstyle{kpointdag}=[kpoint adjoint]
\tikzstyle{kpointadj}=[kpoint adjoint]
\tikzstyle{kpointconj}=[kpoint conjugate]
\tikzstyle{kpointtrans}=[kpoint transpose]
\tikzstyle{big kpoint}=[kpoint, minimum width=1.2 cm, minimum height=8mm, inner sep=4pt, text depth=3mm]
\tikzstyle{wide kpoint}=[kpoint, minimum width=1 cm, inner sep=2pt]%, text depth=-0.7 mm]
\tikzstyle{wide kpointdag}=[kpointdag, minimum width=1 cm, inner sep=2pt]%, text depth=0.7 mm]
\tikzstyle{wide kpointconj}=[kpointconj, minimum width=1 cm, inner sep=2pt]%, text depth=-0.7 mm]
\tikzstyle{wide kpointtrans}=[kpointtrans, minimum width=1 cm, inner sep=2pt]%, text depth=0.7 mm]
\tikzstyle{wider kpoint}=[kpoint, minimum width=1.25 cm, inner sep=2pt]%, text depth=-0.7 mm]
\tikzstyle{wider kpointdag}=[kpointdag, minimum width=1.25 cm, inner sep=2pt]%, text depth=0.7 mm]
\tikzstyle{wider kpointconj}=[kpointconj, minimum width=1.25 cm, inner sep=2pt]%, text depth=-0.7 mm]
\tikzstyle{wider kpointtrans}=[kpointtrans, minimum width=1.25 cm, inner sep=2pt]%, text depth=0.7 mm]
\tikzstyle{gray kpoint}=[kpoint,fill=gray!50!white]
\tikzstyle{gray kpointdag}=[kpointdag,fill=gray!50!white]
\tikzstyle{gray kpointadj}=[kpointadj,fill=gray!50!white]
\tikzstyle{gray kpointconj}=[kpointconj,fill=gray!50!white]
\tikzstyle{gray kpointtrans}=[kpointtrans,fill=gray!50!white]
\tikzstyle{gray dkpoint}=[kpoint,fill=gray!50!white,doubled]
\tikzstyle{gray dkpointdag}=[kpointdag,fill=gray!50!white,doubled]
\tikzstyle{gray dkpointadj}=[kpointadj,fill=gray!50!white,doubled]
\tikzstyle{gray dkpointconj}=[kpointconj,fill=gray!50!white,doubled]
\tikzstyle{gray dkpointtrans}=[kpointtrans,fill=gray!50!white,doubled]
\tikzstyle{white label}=[draw,fill=white,rectangle,inner sep=0.7 mm]
\tikzstyle{gray label}=[draw,fill=gray!50!white,rectangle,inner sep=0.7 mm]
\tikzstyle{black label}=[draw,fill=black,rectangle,inner sep=0.7 mm]
\tikzstyle{dkpoint}=[kpoint,doubled]
\tikzstyle{wide dkpoint}=[wide kpoint,doubled]
\tikzstyle{dkpointdag}=[kpoint adjoint,doubled]
\tikzstyle{wide dkpointdag}=[wide kpointdag,doubled]
\tikzstyle{dkcopoint}=[kpoint adjoint,doubled]
\tikzstyle{dkpointadj}=[kpoint adjoint,doubled]
\tikzstyle{dkpointconj}=[kpoint conjugate,doubled]
\tikzstyle{dkpointtrans}=[kpoint transpose,doubled]
\tikzstyle{kscalar}=[kpoint common, shape=EBox, inner xsep=-1pt, inner ysep=3pt,font=\small]
\tikzstyle{kscalarconj}=[kpoint common, shape=WBox, inner xsep=-1pt, inner ysep=3pt,font=\small]
\tikzstyle{spekpoint}=[kpoint sc,minimum height=5mm,inner sep=3pt]
\tikzstyle{spekcopoint}=[kpoint adjoint sc,minimum height=5mm,inner sep=3pt]
\tikzstyle{dspekpoint}=[spekpoint,doubled]
\tikzstyle{dspekcopoint}=[spekcopoint,doubled]
 \tikzstyle{upground}=[circuit ee IEC,thick,ground,rotate=90,scale=1.4]
 \tikzstyle{upgroundnormal}=[circuit ee IEC,thick,ground,rotate=90,scale=2]
 \tikzstyle{downground}=[circuit ee IEC,thick,ground,rotate=-90,scale=1.4]
 \tikzstyle{bigground}=[regular polygon,regular polygon sides=3,draw=gray,scale=0.50,inner sep=-0.5pt,minimum width=10mm,fill=gray]
\tikzstyle{arrs}=[-latex,font=\small,auto]
\tikzstyle{arrow plain}=[arrs]
\tikzstyle{arrow dashed}=[dashed,arrs]
\tikzstyle{arrow bold}=[very thick,arrs]
\tikzstyle{arrow hide}=[draw=white!0,-]
\tikzstyle{arrow reverse}=[latex-]
\tikzstyle{cdnode}=[]
\let\olddagger\dagger
\renewcommand{\dagger}{\ensuremath{\olddagger}\xspace}
\newkeycommand{\moral}[width=11cm][1]{\begin{center}
\fbox{\ \parbox{\commandkey{width}}{\centering #1\vphantom{Xy}}\ }
\end{center}}
\newkeycommand{\morallong}[width=11cm][1]{\par\medskip\noindent
\centerline{\fbox{\ \parbox{\commandkey{width}}{\centering #1\vphantom{Xy}}\ }}
\par\medskip\noindent}
\def\bR{\begin{color}{red}}
\def\bB{\begin{color}{blue}}
\def\bM{\begin{color}{magenta}}
\def\bC{\begin{color}{cyan}}
\def\bW{\begin{color}{white}}
\def\bBl{\begin{color}{black}}
\def\bG{\begin{color}{green}}
\def\bY{\begin{color}{yellow}}
\def\jR{\begin{color}{magenta}}
\def\jB{\begin{color}{cyan}}
\def\e{\end{color}\xspace}
\newcommand{\bit}{\begin{itemize}}
\newcommand{\eit}{\end{itemize}\par\noindent}
\newcommand{\ben}{\begin{enumerate}}
\newcommand{\een}{\end{enumerate}\par\noindent}
\newcommand{\beq}{\begin{equation}}
\newcommand{\eeq}{\end{equation}\par\noindent}
\newcommand{\beqa}{\begin{eqnarray*}}
\newcommand{\eeqa}{\end{eqnarray*}\par\noindent}
\newcommand{\beqn}{\begin{eqnarray}}
\newcommand{\eeqn}{\end{eqnarray}\par\noindent}
\def\bR{\begin{color}{red}}  
\def\bB{\begin{color}{blue}} 
\def\bM{\begin{color}{magenta}} 
\def\bGr{\begin{color}{darkgray}}
\def\bC{\begin{color}{cyan}}
\def\bW{\begin{color}{white}}
\def\bBl{\begin{color}{black}}
\def\bG{\begin{color}{green}}
\def\bY{\begin{color}{yellow}}
\def\jR{\begin{color}{magenta}}
\def\jB{\begin{color}{cyan}}
\def\e{\end{color}}
\title{Symmetric Monoidal Structure \\ with Local Character is a Property}
\author{
	Stefano Gogioso\\
	University of Oxford \\
	\texttt{stefano.gogioso@cs.ox.ac.uk}
	\and
	Dan Marsden\\
	University of Oxford \\
	\texttt{daniel.marsden@cs.ox.ac.uk}
	\and
	Bob Coecke\\
	University of Oxford \\
	\texttt{bob.coecke@cs.ox.ac.uk}
}
\newcommand{\CatUniverse}{\operatorname{Cats}}
\newcommand{\SymMonCatUniverse}{\operatorname{SMCs}}
\newcommand{\RelQCategory}[1]{\operatorname{Rel}(#1)}
\begin{document}

\maketitle

\begin{abstract}
	In previous work we proved that, for categories of free finite-dimensional modules over a commutative semiring, linear compact-closed symmetric monoidal structure is a property, rather than a structure. That is, if there is such a structure, then it is uniquely defined (up to monoidal equivalence). Here we provide a novel unifying category-theoretic notion of symmetric monoidal structure \emph{with local character}, which we prove to be a property for a much broader spectrum of categorical examples, including the infinite-dimensional case of relations over a quantale and the non-free case of finitely generated modules over a principal ideal domain. 
	%Of additional interest is the use, within our category-theoretic construction, of a notion emerging from quantum foundations, namely tomographic locality. 
\end{abstract}

% \vspace{-4mm}
\section{Introduction} 
\label{section_introduction}

Is it a property, or is it a structure? That is: Is it enough to state that a mathematical object has a certain feature in order to fully specify that feature, or does one have to provide additional details? A prototypical example of a property is a Cartesian monoidal structure, which arises in an essentially unique way from certain categorical limits---namely products---whenever these exist. An example of a structure,  on the other hand, is a group structure imposed a set: already on a four-element set there are at least two different group structures available.  

This is not merely a question of mathematical interest, but also touches upon the foundations of several scientific domains. In one example, the tensor structure of quantum theory is where the characteristic features of the theory truly emerge. In another example, the tensor structure of certain categories determines the compositional aspects of natural language meaning. As a consequence, the freedom one has in choosing said structure is of fundamental scientific interest.    

When reasoning about physical theories, the kind of tensor structure a theory possesses says something about the nature of the interactions in the theory, or in other words about the behaviour of composite systems. If the tensor is Cartesian, then the state of a joint system can be fully specified by specifying the states of the individual sub-systems. On the opposite end of the spectrum, when the tensor is compact-closed this fails in the most extreme of manners, with the emergence of many states which cannot be understood by looking at sub-systems alone. The diagrammatic language of symmetric monoidal categories makes this very obvious when depicting states of two systems, with Cartesian states always separated and compact-closed states (almost) always connected.
\[
\begin{array}{ccc}
\tikzfig{Cartesian}&\qquad\qquad&\tikzfig{cupX}\vspace{2mm}\\
\mbox{Cartesian} &\qquad\qquad& \mbox{compact-closed}
\end{array} 
\]

In our previous work \cite{Uniqueness2017}, we have carried out an initial investigation on the uniqueness of compact-closed symmetric monoidal structure \cite{Kelly,KellyLaplaza}, asking the question whether it be a property for process theories (by which we mean symmetric monoidal categories, or SMCs for short). We have provided a positive answer for categories $\RMatCategory{S}$ of free finite-dimensional modules over a commutative semiring $S$ \cite{FantasticQT2017}. While these categories are of top interest in categorical quantum mechanics \cite{CQM2004,PQP2017} and compositional distributional linguistics \cite{DisCoCat2010}, they constitute a highly restricted family of linear-algebraic categories. Further to its reliance on linear-algebra, our original proof heavily relied on compact-closure, freeness and finite-dimensionality.

% [KILL? (too technical for intro)] The original proof from \cite{Uniqueness2017} was linear-algebraic in nature, and it showed that there is a unique choice of $S$-linear symmetric monoidal structure on $\RMatCategory{S}$ which is compact-closed, up to $S$-linear monoidal equivalence. The compact-closure requirement played a key role in the proof, to establish dimensional rigidity: because the $S$-module of states for $S^n\otimes S^m$ is $S$-linearly isomorphic to the $S$-module of maps $S^n \rightarrow S^m$---which is $S$-linearly isomorphic to $S^{nm}$ independently of $\otimes$---any two compact-closed symmetric monoidal structures must satisfy $S^n \rightarrow S^m \isom S^{nm}$. Unfortunately, dimensional rigidity only works in the free finite-dimensional case, so that a number of other process theories of interest in quantum theory and linguistics were unfortunately excluded from the uniqueness result.

In this work, we ditch the model-dependent assumptions of linear-algebra, freeness and finite-dimensionality, as well as the requirement of compact-closure. We replace these by the new category-theoretic notion of \emph{local character} given by some \emph{$\otimes$-free} category, and we show that symmetric monoidal structure with such local character is again essentially unique. Besides the fact that our new framework has a ``pure'' category-theoretic formulation, one major upshot is that 
%it is not restricted to one extreme to the spectrum of possible degrees of interaction within a theory, compact closure being that extreme. Now, the other extreme namely products, is also covered.   \e
our uniqueness result now extends to a much broader class of examples, including the infinite-dimensional example of categories of relations over a quantale and the non-free example of categories of modules over a principal ideal domain, which we discuss in Section \ref{section_applications}. 
%From a quantum foundations perspective, one particularly interesting feature is the fact that in the category-theoretic notion of local character a crucial role is played by local tomography \cite{araki1980characterization,bergia1980actual}, an axiom that plays a central role in many of the reconstructions of quantum theory \cite{HardyAxiom,Chiri2,selby2018reconstructing}. 
%\bR ... intuitive meaning local character ... connection with tomographic locality + refs ... \e  

%\bR ... based on the intuition that certain \bR interacting \e process theories are fully described---within a suitable categorical context---by some \bR $\otimes$-free \e sub-theory\TODOb{To me it sounds as if it is the theory that is interacting with some other theory, rather than inside the theory.  "Interaction-free"?} (by which we mean a sub-category)\e. 

\section{SMCs with local character} 

We will introduce the new categorical notion of a SMC \emph{with local character}, i.e. one where categorical data specified on it is uniquely determined by data specified on some \emph{$\otimes$-free subcategory}. Here, $\otimes$-free does not just mean that there is no monoidal structure, but also that there is no trace at all of the $\otimes$-structure from the parent SMC, which could, for example, still be present in the factorisation structure of objects. We now proceed to make this intuition formal.

\subsection{\texorpdfstring{$\otimes$-Free Subcategories}{Tensor-Free Subcategories}}

% To begin with, we need to provide a suitable categorical formulation for the notion of \bR \emph{$\otimes$-free} sub-theory\e. 

\begin{definition}\em
	Let $\mathcal{C}$ be a SMC, and $\mathcal{A}$ be a sub-category. The \emph{minimal span} $\langle \mathcal{A} \rangle_{\otimes}$ of $\mathcal{A}$ in $\mathcal{C}$ is the smallest sub-SMC of $\mathcal{C}$ which contains $\mathcal{A}$:
	\begin{itemize}
		\item every object in $\langle \mathcal{A} \rangle_{\otimes}$ can be written---up to associators and unitors---as $\otimes_{j=1}^n A_j$ for some family $(A_j)_{j=1}^S$ of objects $A_j \in \obj{\mathcal{A}}$, where the empty tensor product is taken to be the tensor unit;
		\item every morphism in $\langle \mathcal{A} \rangle_{\otimes}$ can be written---up to associators, unitors and symmetry isomorphism---as $\otimes_{j=1}^n f_j$ for some $(f_j: A_j \rightarrow B_j)_{j=1}^n$ is a family of morphisms between objects $A_j,B_j \in \obj{\mathcal{A}}$. 
	\end{itemize}
	The \emph{maximal span} $\overline{\langle\mathcal{A}\rangle}_{\otimes}$ of $\mathcal{A}$ in $\mathcal{C}$ is the smallest full sub-SMC of $\mathcal{C}$ which contains $\mathcal{A}$; equivalently, it is the full sub-SMC of $\mathcal{C}$ spanned by the objects of $\langle\mathcal{A}\rangle_{\otimes}$.
\end{definition}

\begin{definition}\em  
	Let $\mathcal{C}$ be a SMC. If $A,B \in \obj{\mathcal{C}}$ we say that \emph{$A$ $\otimes$-divides $B$}, written $A \vert B$, if there is some $A' \in \obj{\mathcal{C}}$ such that $A'$ is not isomorphic to the tensor unit and $A \otimes A' \isom B$. An object $A \in \obj{\mathcal{C}}$ is said to be \emph{$\otimes$-prime} if: (i) it is not a zero object\footnote{By which we mean one which is absorbing for the tensor product, up to isomorphism.}, (ii) it is not isomorphic to the tensor unit, and (iii) whenever $A | B\otimes C$ we have that $A | B$ or $A | C$. An object $A \in \obj{\mathcal{C}}$ is said to be \emph{uniquely $\otimes$-factorisable} if it is either a zero object or it can be written in a unique way---up to associators, unitors and symmetry isomorphisms---as a tensor product of $\otimes$-prime objects.
\end{definition}
 
\begin{definition}\em
	A SMC $\mathcal{C}$ is \emph{product tomographic} whenever given any two families $(f_j,g_j: A_j \rightarrow B_j)_{j=1}^n$ of processes in $\mathcal{C}$, if for all families of states $(a_j:\tensorUnit \rightarrow A_j)_{j=1}^n$ and effects $(b_j: B_j \rightarrow \tensorUnit)_{j=1}^n$ in $\mathcal{C}$ we have the following equality between scalars:
	\[
		\tikzfig{LocalTom2}\ \ = \ \ \tikzfig{LocalTom1}
	\]
	then we actually had the following equality between processes in the first place:
	\[
		\tikzfig{LocalTom4}\ \ = \ \ \tikzfig{LocalTom3}
	\]
%we have that $\otimes_{j=1}^n f_j = \otimes_{j=1}^n g_j$ if and only if $b_j \circ f_j \circ a_j = b_j \circ g_j \circ a_j$ for all $j=1,...,n$ and for all families of states $(a_j:\tensorUnit \rightarrow A_j)_{j=1}^n$ and effects $(b_j: B_j \rightarrow \tensorUnit)_{j=1}^n$ in $\mathcal{C}$.
\end{definition}
\noindent Note that the notion of product tomography defined above is much weaker than the notion of \emph{local tomography} \cite{araki1980characterization,bergia1980actual} appearing in a number of reconstructions of quantum theory \cite{HardyAxiom,Chiri2,selby2018reconstructing} and from which the name ``product tomography'' is inspired. For example, it is enough (but by no means necessary) to assume that: 
\begin{enumerate}
	\item[(i)] every process $f\neq 0$ admits some state $a$ and some effect $b$ such that  $b \circ f \circ a$ is an invertible scalar;
	\item[(ii)] if any $f,g:A \rightarrow B$ satisfy $b \circ f \circ a = b \circ g \circ a$ for every state $a$ on $A$ and effect $b$ on $B$, then $f=g$.
\end{enumerate}
\begin{definition}\em
	Let $\mathcal{C}$ be a SMC. We say that a sub-category $\mathcal{A}$ is \emph{$\otimes$-free} if the following conditions hold:
	\begin{enumerate}
		\item[(i)] the tensor unit is an object of $\mathcal{A}$, and all other objects of $\mathcal{A}$ are $\otimes$-prime;
		\item[(ii)] the objects of $\langle \mathcal{A} \rangle_{\otimes}$ are all uniquely $\otimes$-factorisable;\footnote{By condition (i) and definition of the minimal span, the unique factorisation is necessarily in terms of objects of $\mathcal{A}$.}
		\item[(iii)] the SMC $\langle \mathcal{A} \rangle_{\otimes}$ is product tomographic.
	\end{enumerate}
	We say that a SMC $\mathcal{C}$ is \emph{freely interacting} if it has a reflective sub-category in the form $\langle \mathcal{A} \rangle_{\otimes}$, for some $\otimes$-free sub-category $\mathcal{A}$, such that the inclusion-retraction pair is an adjoint equivalence $\mathcal{C} \simeq \langle \mathcal{A} \rangle_{\otimes}$.
\end{definition}

\noindent The reason for the product tomography requirement is that, in its absence, the tensor product itself could be hiding some form of interaction between systems which might not be discoverable by only considering the $\otimes$-free fragment. The following result characterises the categorical correspondence between the non-monoidal ``$\otimes$-free'' perspective and the monoidal ``freely interacting'' perspective.
\begin{proposition}  
	Let $\mathcal{C} \simeq \langle \mathcal{A} \rangle_{\otimes}$ and $\mathcal{D} \simeq \langle \mathcal{B} \rangle_{\boxtimes}$ be two freely interacting SMCs.
	Any functor $F: \mathcal{A} \rightarrow \mathcal{B}$ between the corresponding $\otimes$-free sub-categories which is full on states and effects lifts to an essentially unique\footnote{I.e. unique up to natural monoidal isomorphism.} monoidal functor $\bar{F}:\mathcal{C} \rightarrow \mathcal{D}$.
\end{proposition}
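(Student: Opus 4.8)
The plan is to build the lift on the minimal spans and then transport it along the two adjoint equivalences: since monoidal functors compose with monoidal equivalences, it suffices to produce a (strong) monoidal functor $\langle\mathcal{A}\rangle_{\otimes}\to\langle\mathcal{B}\rangle_{\boxtimes}$ restricting to $F$ on $\mathcal{A}$, and to compose it with $\mathcal{C}\simeq\langle\mathcal{A}\rangle_{\otimes}$ and $\langle\mathcal{B}\rangle_{\boxtimes}\simeq\mathcal{D}$. Throughout I take $F(\tensorUnit)=\tensorUnit$, which is forced if \emph{fullness on states and effects} is to compare $\mathcal{A}(\tensorUnit,A)$ with $\mathcal{B}(\tensorUnit,F(A))$. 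On objects I would use unique $\otimes$-factorisability (condition (ii) of $\otimes$-freeness): every object of $\langle\mathcal{A}\rangle_{\otimes}$ has an essentially unique expression $\otimes_{j=1}^{n}A_j$ with each $A_j\in\obj{\mathcal{A}}$ either $\otimes$-prime or the unit, so $\bar{F}(\otimes_{j=1}^{n}A_j):=\boxtimes_{j=1}^{n}F(A_j)$ is unambiguous up to coherence isomorphisms. On morphisms I would pick, for each $\phi$, a representation $\phi=\otimes_{j=1}^{n}f_j$ (up to associators, unitors and symmetries) as guaranteed by the definition of the minimal span, and set $\bar{F}(\phi):=\boxtimes_{j=1}^{n}F(f_j)$, sending each structural isomorphism of $\mathcal{C}$ to the corresponding one of $\mathcal{D}$.

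The crux of the argument, and the step I expect to be the main obstacle, is showing that this assignment on morphisms is independent of the chosen factorisation; this is exactly where product tomography and fullness on states and effects enter. Suppose $\otimes_j f_j$ and $\otimes_j g_j$ represent the same morphism, where unique factorisation of the domain and codomain has been used to align the prime factors up to symmetry, so that $f_j,g_j:A_j\rightarrow B_j$. To deduce $\boxtimes_j F(f_j)=\boxtimes_j F(g_j)$ I would invoke product tomography in $\langle\mathcal{B}\rangle_{\boxtimes}$: it suffices to check, for every family of states $a_j:\tensorUnit\rightarrow F(A_j)$ and effects $b_j:F(B_j)\rightarrow\tensorUnit$, the scalar equality $\boxtimes_j(b_j\circ F(f_j)\circ a_j)=\boxtimes_j(b_j\circ F(g_j)\circ a_j)$. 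Since $F(A_j)$ and $F(B_j)$ are single objects of $\mathcal{B}$, such states and effects already live in $\mathcal{B}$, and fullness on states and effects lets me write $a_j=F(\alpha_j)$ and $b_j=F(\beta_j)$ for $\alpha_j:\tensorUnit\rightarrow A_j$ and $\beta_j:B_j\rightarrow\tensorUnit$ in $\mathcal{A}$, whence $b_j\circ F(f_j)\circ a_j=F(\beta_j\circ f_j\circ\alpha_j)=:F(s_j)$ with $s_j\in\mathcal{A}(\tensorUnit,\tensorUnit)$.

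To finish the well-definedness I would use the standard fact that on scalars the tensor product and composition coincide and form a commutative monoid. Applying the product state $\otimes_j\alpha_j$ and product effect $\otimes_j\beta_j$ to the source equality $\otimes_j f_j=\otimes_j g_j$ and using the interchange law yields $\otimes_j s_j=\otimes_j t_j$ in $\langle\mathcal{A}\rangle_{\otimes}$, where $t_j=\beta_j\circ g_j\circ\alpha_j$; since tensor equals composition on scalars, this reads $s_1\circ\cdots\circ s_n=t_1\circ\cdots\circ t_n$ in $\mathcal{A}(\tensorUnit,\tensorUnit)$. As $F$ is a functor it is a monoid homomorphism on endomorphisms of the unit, so $F(s_1)\circ\cdots\circ F(s_n)=F(t_1)\circ\cdots\circ F(t_n)$, and converting composition back to tensor on scalars in the target gives $\boxtimes_j F(s_j)=\boxtimes_j F(t_j)$, which is precisely the required scalar equality. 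Product tomography then delivers $\bar{F}(\otimes_j f_j)=\bar{F}(\otimes_j g_j)$.

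With $\bar{F}$ well-defined, functoriality and strong monoidality are routine: the interchange law gives $(\otimes_j f_j)\circ(\otimes_j g_j)=\otimes_j(f_j\circ g_j)$ so $\bar{F}$ preserves composition, identities are tensors of identities, and by construction $\bar{F}$ carries the associators, unitors and symmetry of $\mathcal{C}$ to those of $\mathcal{D}$, so the comparison morphisms $\bar{F}(X)\boxtimes\bar{F}(Y)\rightarrow\bar{F}(X\otimes Y)$ are the canonical coherence isomorphisms. By construction $\bar{F}$ restricts to $F$ on $\mathcal{A}$. For essential uniqueness, if $G$ is another monoidal lift then, transported to $\langle\mathcal{A}\rangle_{\otimes}$, monoidality forces $G(\otimes_j A_j)\cong\boxtimes_j G(A_j)\cong\boxtimes_j F(A_j)$ and $G(\otimes_j f_j)\cong\boxtimes_j F(f_j)$, naturally in the arguments; assembling the coherence isomorphisms of $G$ produces a natural monoidal isomorphism $G\cong\bar{F}$, giving uniqueness up to natural monoidal isomorphism. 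I expect the only genuinely delicate point to be the well-definedness argument above; the alignment of prime factors by symmetries and the verification of the monoidal coherence conditions are bookkeeping.
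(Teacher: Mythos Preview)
Your proposal is correct and follows essentially the same route as the paper: define $\hat{F}$ on the minimal span by $\otimes_j A_j\mapsto\boxtimes_j F(A_j)$ and $\otimes_j f_j\mapsto\boxtimes_j F(f_j)$, verify well-definedness on morphisms via product tomography in $\langle\mathcal{B}\rangle_{\boxtimes}$ combined with fullness on states and effects, then transport along the adjoint equivalences and use the co-unit of the equivalence for essential uniqueness. Your treatment of the scalar step (converting $\otimes_j s_j=\otimes_j t_j$ to a composition in $\mathcal{A}(\tensorUnit,\tensorUnit)$, pushing through $F$ functorially, and converting back) is in fact more explicit than the paper's, which simply applies $\hat{F}$ to both sides and invokes the definition; the underlying content is the same.
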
 
\begin{proof}  
	We begin by defining a monoidal functor $\hat{F}: \langle \mathcal{A} \rangle_{\otimes} \rightarrow \langle \mathcal{B} \rangle_{\boxtimes}$ as follows, using the fact that the objects of $\langle \mathcal{A} \rangle_{\otimes}$ are uniquely $\otimes$-factorisable 
	\begin{itemize}
		\item on objects, we set $\hat{F}(\otimes_{j=1}^n A_j) := \boxtimes_{j=1}^n F(A_j)$;
		\item on morphisms, we set $\hat{F}(\otimes_{j=1}^n f_j) := \boxtimes_{j=1}^n F(f_j)$;
		\item we respect all associators/unitors/symmetry isomorphisms; 
	\end{itemize}
	The functor $\hat{F}$ will evidently be monoidal, but first we need to check that it is actually well-defined. 

	On objects, well-definition of $\hat{F}$ follows from unique $\otimes$-factorisability. On morphisms, we can restrict our attention to the case of $\hat{F}(\otimes_{j=1}^n f_j)$: all other morphisms can be obtained by associators, unitors and symmetry isomorphisms, which are respected by $\hat{F}$. For every pair of families $(f_j,g_j:A_j \rightarrow B_j)_{j=1}^n$, we need to show that $\otimes_{j=1}^n f_j = \otimes_{j=1}^n g_j$ implies $\hat{F}(\otimes_{j=1}^n f_j) = \hat{F}(\otimes_{j=1}^n g_j)$. If $(a_j:\tensorUnit \rightarrow A_j)_{j=1}^n$ and $(b_j:B_j \rightarrow \tensorUnit)_{j=1}^n$ are arbitrary families of states/effects, then $\otimes_{j=1}^n f_j = \otimes_{j=1}^n g_j$ implies the following:
	\[
		(\otimes_{j=1}^n b_j)\circ(\otimes_{j=1}^n f_j)\circ(\otimes_{j=1}^n a_j)
		=
		(\otimes_{j=1}^n b_j)\circ(\otimes_{j=1}^n g_j)\circ(\otimes_{j=1}^n a_j)
	\]
	Using the exchange law, we can re-write the above as $\otimes_{j=1}^n (b_j \circ f_j\circ a_j)=\otimes_{j=1}^n (b_j \circ g_j\circ a_j)$, and the LHS/RHS get sent to the following by $\Phi$:
	\[
		\begin{array}{rcccl}
			\hat{F}\Big(\otimes_{j=1}^n (b_j \circ f_j\circ a_j)\Big)
			&=&
			\boxtimes_{j=1}^n F(b_j \circ f_j\circ a_j) 
			&=& 
			F(\boxtimes_{j=1}^n b_j) \circ F(\boxtimes_{j=1}^n f_j) \circ F(\boxtimes_{j=1}^n a_j)
			\\
			\hat{F}\Big(\otimes_{j=1}^n (b_j \circ g_j\circ a_j)\Big)
			&=&
			\boxtimes_{j=1}^n F(b_j \circ g_j\circ a_j) 
			&=& 
			F(\boxtimes_{j=1}^n b_j) \circ F(\boxtimes_{j=1}^n g_j) \circ F(\boxtimes_{j=1}^n a_j)
			\\
		\end{array}
	\]
	We can now use product tomography of $\langle \mathcal{B} \rangle_{\boxtimes}$, together with the fact that $F$ is full on states and effects, to conclude that $\hat{F}\Big(\otimes_{j=1}^n f_j\Big)= \boxtimes_{j=1}^n F(f_j)$ and $\hat{F}\Big(\otimes_{j=1}^n g_j\Big)= \boxtimes_{j=1}^n F(g_j)$ are actually the same morphism.

	Having successfully lifted $F: \mathcal{A} \rightarrow \mathcal{B}$ to $\hat{F}:\langle \mathcal{A} \rangle_{\otimes} \rightarrow \langle \mathcal{B} \rangle_{\boxtimes}$, we now obtain a lifting to the freely interacting categories by considering the monoidal functor $\bar{F} := \hat{F}\circ R:\mathcal{C} \rightarrow \mathcal{D}$, where $R: \mathcal{C} \rightarrow \langle \mathcal{A} \rangle_{\otimes}$ is the retraction for the reflective sub-category equivalence.

	Finally, essential uniqueness can be proven as follows. By very construction of the category $\langle \mathcal{A} \rangle_{\otimes}$, the lifting $\hat{F}$ is necessarily unique, so any monoidal $G:\mathcal{C} \rightarrow \mathcal{D}$ which restricts to $G|_{\mathcal{A}} = F: \mathcal{A} \rightarrow \mathcal{B}$ must also restrict to $G|_{\langle \mathcal{A} \rangle_{\otimes}} = \hat{F}: \langle \mathcal{A} \rangle_{\otimes} \rightarrow \langle \mathcal{B} \rangle_{\boxtimes}$. If $E:  \langle \mathcal{A} \rangle_{\otimes} \rightarrow \mathcal{C}$ is the injection for the reflective sub-category equivalence and $\epsilon: E\circ R \stackrel{\isom}{\rightarrow} \id{\mathcal{C}}$ is the co-unit for the equivalence, then we can construct a natural isomorphism $G \epsilon: G \stackrel{\isom}{\rightarrow} G\circ E \circ R$, and we conclude by observing that $G \circ E \circ R = G|_{\langle \mathcal{A} \rangle_{\otimes}} \circ R = \hat{F} \circ R = \bar{F}$.
\end{proof}

\subsection{SMCs with Local Character}  

Having \emph{local character} for a SMC means that data specified on some specific $\otimes$-free subcategory can always be lifted---in an essentially unique way---to the whole SMC. In practice, the existence of such a lifting may require the data to live in a sufficiently structured category, while its uniqueness may require the transformations allowed on the data to be sufficiently rigid. Existence and uniqueness may also depend on the amount of structure possessed by the $\otimes$-free subcategory. 
As a consequence, our notion of local character will be defined relative to two `universes', one specifying the structural constraints for the parent SMC and another one specifying the structural constraints for the $\otimes$-free sub-category.

\begin{definition}\em  
	Let $\CatUniverse$ be the category of (suitably small) categories and functors between them, and let $\SymMonCatUniverse \hookrightarrow \CatUniverse$ be the sub-category of symmetric monoidal categories and monoidal functors between them. We define a \emph{universe} to be a sub-category of $\CatUniverse$, and a \emph{SMC-universe} to be a sub-category of $\SymMonCatUniverse$.
\end{definition}
Specifying a (SMC-)universe is an extremely abstract way of enforcing categorical requirements on theories of interest in a given context. There are many SMC-universes that star in recurring roles in the categorical study of quantum theory and linguistics:
\begin{itemize}
	\item the SMC-universe of SMCs and monoidal functors;
	\item the SMC-universe of compact closed SMCs and monoidal functors;
	\item the SMC-universe of $\CMonCategory$-enriched SMCs and linear monoidal functors;
	\item the SMC-universe of categories of relations over quantales and continuous linear functors;
	\item the SMC-universe of SMCs enriched in $R$-modules and $R$-linear functors between them;
\end{itemize}
Our notion of \emph{local character} will be specified with respect to two such universes: a SMC-universe $\Theta$ for the interacting theory, and a larger universe $\Xi$ for the atomic sub-theory. This means that theories which have local character in the presence of certain structure may not have local character when different structure is chosen instead. For example, we will see later on that $\fHilbCategory$ has local character in the presence of linear structure (when seen as the category $\RMatCategory{\complexs}$), but it's easy to see that it does not in general.
\begin{definition}\em  
	Let $\Xi$ be a universe and $\Theta$ be a SMC-universe. We say that $(\Xi,\Theta)$ is a \emph{$\otimes$-free/interacting pair of universes} if the following conditions hold:
	\begin{enumerate}
		\item $\Theta$ is a sub-category of $\Xi$;
		\item if $\mathcal{C}$ is an SMC in $\obj{\Theta}$ and $\mathcal{A}$ is a $\otimes$-free sub-category of $\mathcal{C}$, then $\mathcal{A}$ is a category in $\Xi$ and the sub-category inclusion $\mathcal{A} \hookrightarrow \mathcal{C}$ is a functor in $\Xi$.
	\end{enumerate}
\end{definition}
% \begin{definition}\em  
% 	Let $(\Xi,\Theta)$ be a $\otimes$-free/interacting pair of universes. We say that a SMC $\mathcal{C} \in \obj{\Theta}$ has \emph{local character} with respect to $(\Xi,\Theta)$ if there is a $\otimes$-free sub-category $\mathcal{A}$ which satisfies the following \emph{lifting property}:
% 	\begin{itemize}
% 		\item for every $\mathcal{D} \in \obj{\Theta}$ and every functor $F: \mathcal{A} \rightarrow \mathcal{D}$ in $\Xi$ which sends the tensor unit of $\mathcal{C}$ to the tensor unit of $\mathcal{D}$, there is an essentially unique\footnote{I.e. unique in $\Theta$ up to natural monoidal isomorphism.} monoidal functor $\bar{F} : \mathcal{C} \rightarrow \mathcal{D}$ in $\Theta$ with $\bar{F}|_{\mathcal{A}} = F$. 
% 	\end{itemize}
% 	We say that any such $\otimes$-free sub-category $\mathcal{A}$ \emph{gives local character} to the SMC $\mathcal{C}$.
% \end{definition}
\begin{definition}\em  
	Let $(\Xi,\Theta)$ be a $\otimes$-free/interacting pair of universes. Let $\mathcal{C} \in \obj{\Theta}$ be a SMC, and let $\mathcal{A}$ be a $\otimes$-free sub-category of $\mathcal{C}$ which satisfies the following \emph{lifting property}:
	\begin{itemize}
		\item for every $\mathcal{D} \in \obj{\Theta}$ and every functor $F: \mathcal{A} \rightarrow \mathcal{D}$ in $\Xi$ which sends the tensor unit of $\mathcal{C}$ to the tensor unit of $\mathcal{D}$, there is an essentially unique\footnote{I.e. unique in $\Theta$ up to natural monoidal isomorphism.} monoidal functor $\bar{F} : \mathcal{C} \rightarrow \mathcal{D}$ in $\Theta$ with $\bar{F}|_{\mathcal{A}} = F$. 
	\end{itemize}
	We say that $\mathcal{C}$ has \emph{local character given by $\mathcal{A}$} with respect to $(\Xi,\Theta)$.
\end{definition}
Intuitively, we could think of the universes $\Xi$ and $\Theta$ as the domain and codomain of a free construction $F:\mathcal{A} \mapsto \mathcal{C}$: the lifting property would act as some kind of weak universal property establishing a weak adjunction $\Hom{\Xi}{\mathcal{A}}{\mathcal{D}} \simeq \Hom{\Theta}{\mathcal{C}}{\mathcal{D}}$ between $F:\Xi \rightarrow \Theta$ and the inclusion functor $\Theta \hookrightarrow \Xi$. This intuition provides good guidance when looking at the uniqueness result below, but one should be careful not to take the analogy too literally: the free construction above is not well-defined at all. There are a number of good reasons for this, some of which are listed below, but ultimately the issue boils down to the fact that the interesting object of study is the SMC $\mathcal{C}$, and not the sub-category $\mathcal{A}$.
\begin{itemize}
	\item Not all choices of $\otimes$-free sub-category $\mathcal{A}$ of $\mathcal{C}$ in $\Xi$ are guaranteed to work, in the sense that the lifting property will be satisfied, and the working choices are not guaranteed to be unique or natural.
	\item The category $\mathcal{C}$ is not determined by the category $\mathcal{A}$ in a unique or natural way, so the functor $F:\mathcal{A} \mapsto \mathcal{C}$ is not well-defined.
	\item The categories $\mathcal{C}$ and $\mathcal{A}$ have constraints not satisfied by the category $\mathcal{D}$, so the homsets for the adjunction are not well-defined.
\end{itemize}
In fact, not even the (very special) freely interacting case $\mathcal{A} \mapsto \langle \mathcal{A} \rangle_{\otimes}$ is well-defined: the SMC $\langle \mathcal{A} \rangle_{\otimes}$ is not, in general, the free SMC on $\mathcal{A}$, depending instead on the specific tensor structure of the parent $\mathcal{C}$.

Having clarified this, we are now in a position to formulate our uniqueness result. Intuitively, we wish to show that there is at most one way---up to equivalence---of turning a specified $\otimes$-free theory into an interacting theory with local character (w.r.t. a fixed $\otimes$-free/interacting pair of universes). In other words, we wish to show that \emph{having local character given by some $\otimes$-free theory is a property} for SMCs, at least with respect to a specified $\otimes$-free/interacting pair of universes. Because the universes might impose arbitrary requirements on their theories, the result is more clearly formulated from the outside-in: we start from two theories with local character given by the same $\otimes$-free sub-theory (up to isomorphism) and we show that they must be equivalent in the chosen SMC-universe.
\begin{theorem}
\label{thm_main}
	Let $(\Xi,\Theta)$ be a $\otimes$-free/interacting pair of universes. Let $\mathcal{C}, \mathcal{D} \in \obj{\Theta}$ have local character w.r.t. $(\Xi,\Theta)$, and assume that the $\otimes$-free sub-categories $\mathcal{A},\mathcal{B}$ giving them local character are isomorphic in $\Xi$. Then there exists a monoidal equivalence of categories $\mathcal{C} \simeq \mathcal{D}$ in $\Theta$. Furthermore, the monoidal equivalence restricts to the chosen isomorphisms $\xi:\mathcal{A}\stackrel{\isom}{\rightarrow}\mathcal{B}$ and $\xi^{-1}:\mathcal{B}\stackrel{\isom}{\rightarrow}\mathcal{A}$, and it is essentially the only one in $\Theta$ doing so.
\end{theorem}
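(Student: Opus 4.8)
The plan is to use the lifting property of local character twice, in opposite directions, and then to bootstrap an equivalence out of the \emph{uniqueness} half of that property. Write $\iota_{\mathcal{A}}:\mathcal{A}\hookrightarrow\mathcal{C}$ and $\iota_{\mathcal{B}}:\mathcal{B}\hookrightarrow\mathcal{D}$ for the sub-category inclusions, which are functors in $\Xi$ by clause (2) of the definition of a $\otimes$-free/interacting pair of universes. Since $\Xi$ is closed under composition, $\iota_{\mathcal{B}}\circ\xi:\mathcal{A}\rightarrow\mathcal{D}$ is a functor in $\Xi$; as $\xi$ matches the distinguished tensor units of the two $\otimes$-free sub-categories, it sends the tensor unit of $\mathcal{C}$ to that of $\mathcal{D}$. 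Local character of $\mathcal{C}$ given by $\mathcal{A}$ then yields an essentially unique monoidal functor $\Phi:\mathcal{C}\rightarrow\mathcal{D}$ in $\Theta$ with $\Phi|_{\mathcal{A}}=\iota_{\mathcal{B}}\circ\xi$. Symmetrically, local character of $\mathcal{D}$ given by $\mathcal{B}$, applied to $\iota_{\mathcal{A}}\circ\xi^{-1}:\mathcal{B}\rightarrow\mathcal{C}$, produces $\Psi:\mathcal{D}\rightarrow\mathcal{C}$ in $\Theta$ with $\Psi|_{\mathcal{B}}=\iota_{\mathcal{A}}\circ\xi^{-1}$. By construction $\Phi$ restricts to $\xi$ and $\Psi$ to $\xi^{-1}$.

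The next step is to show $\Psi\circ\Phi\cong\id{\mathcal{C}}$ and $\Phi\circ\Psi\cong\id{\mathcal{D}}$. Because $\Theta$ is closed under composition, $\Psi\circ\Phi:\mathcal{C}\rightarrow\mathcal{C}$ is a monoidal functor in $\Theta$, and computing its restriction to $\mathcal{A}$ gives
\[
	(\Psi\circ\Phi)|_{\mathcal{A}} = \Psi\circ(\iota_{\mathcal{B}}\circ\xi) = (\Psi|_{\mathcal{B}})\circ\xi = \iota_{\mathcal{A}}\circ\xi^{-1}\circ\xi = \iota_{\mathcal{A}}.
\]
The identity functor $\id{\mathcal{C}}$ is likewise a monoidal functor in $\Theta$ restricting to $\iota_{\mathcal{A}}$. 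I would then invoke the lifting property of $\mathcal{C}$ a second time, now applied to the inclusion $\iota_{\mathcal{A}}:\mathcal{A}\rightarrow\mathcal{C}$ (itself a functor in $\Xi$ preserving the tensor unit): it guarantees an \emph{essentially unique} monoidal functor $\mathcal{C}\rightarrow\mathcal{C}$ in $\Theta$ restricting to $\iota_{\mathcal{A}}$. Since both $\Psi\circ\Phi$ and $\id{\mathcal{C}}$ qualify, essential uniqueness forces a natural monoidal isomorphism $\Psi\circ\Phi\cong\id{\mathcal{C}}$. The same argument on the other side gives $\Phi\circ\Psi\cong\id{\mathcal{D}}$, so $\Phi$ is a monoidal equivalence in $\Theta$ restricting to $\xi$, with pseudo-inverse $\Psi$ restricting to $\xi^{-1}$.

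For the final claim, suppose $\Phi'$ is any monoidal equivalence in $\Theta$ with $\Phi'|_{\mathcal{A}}=\iota_{\mathcal{B}}\circ\xi$. Then $\Phi$ and $\Phi'$ are both monoidal functors $\mathcal{C}\rightarrow\mathcal{D}$ in $\Theta$ lifting the single functor $\iota_{\mathcal{B}}\circ\xi:\mathcal{A}\rightarrow\mathcal{D}$, so the essential-uniqueness half of local character of $\mathcal{C}$ delivers $\Phi'\cong\Phi$ as monoidal functors, which is the asserted essential uniqueness.

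The main obstacle I anticipate is not the construction of $\Phi$ and $\Psi$, which is a direct application of the hypotheses, but the careful verification that every auxiliary functor fed into the lifting property genuinely lives in $\Xi$ and respects the distinguished tensor unit---in particular that $\xi$, assumed only to be an isomorphism in $\Xi$, matches the tensor units of $\mathcal{A}$ and $\mathcal{B}$. If this is not automatic from $\Xi$ recording enough structure, it must be folded into the meaning of ``the $\otimes$-free sub-categories giving local character are isomorphic''. The one genuinely load-bearing idea is the bootstrapping of the second paragraph: the equivalence is produced not by exhibiting explicit unit/counit data, but by playing the \emph{uniqueness} clause of local character against the two manifest liftings $\Psi\circ\Phi$ and $\id{\mathcal{C}}$ of the inclusion $\iota_{\mathcal{A}}$.
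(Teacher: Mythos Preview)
Your proposal is correct and follows essentially the same approach as the paper: lift $\iota_{\mathcal{B}}\circ\xi$ and $\iota_{\mathcal{A}}\circ\xi^{-1}$ using the existence clause of local character, then compare $\Psi\circ\Phi$ and $\id{\mathcal{C}}$ (and symmetrically) using the uniqueness clause to obtain the equivalence, and invoke uniqueness once more for the final essential-uniqueness claim. Your extra caution about $\xi$ matching the tensor units is a point the paper leaves implicit; otherwise the arguments are the same up to notation.
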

\begin{proof}
	Consider the isomorphism $\xi: \mathcal{A} \stackrel{\isom}{\rightarrow} \mathcal{B}$, $\xi^{-1}: \mathcal{B} \stackrel{\isom}{\rightarrow} \mathcal{A}$, the sub-category injection $E_{\mathcal{A}}: \mathcal{A} \hookrightarrow \mathcal{C}$ and the sub-category injection $E_{\mathcal{B}}: \mathcal{B} \hookrightarrow \mathcal{D}$; these are all functors in $\Xi$. By local character of $\mathcal{C}$, the functor $\Phi:= E_{\mathcal{B}} \circ \xi: \mathcal{A} \rightarrow \mathcal{D}$ has an essentially unique lifting $\bar{\Phi}: \mathcal{C} \rightarrow \mathcal{D}$ in $\Theta$. By local character of $\mathcal{D}$, the functor $\Psi:= E_{\mathcal{A}} \circ \xi^{-1}: \mathcal{B} \rightarrow \mathcal{C}$ has an essentially unique lifting $\bar{\Psi}: \mathcal{D} \rightarrow \mathcal{C}$ in $\Theta$.

	We have used the existence clause of local character to construct $\bar{\Phi}$ and $\bar{\Psi}$, and now we will use the essential uniqueness clause to turn them into an equivalence. Consider the functor $\bar{\Psi} \circ \bar{\Phi}: \CategoryC \rightarrow \CategoryC$ in $\Theta$. Because of the way $\Phi$ and $\Psi$ were defined, when restricted to $\mathcal{A}$ this functor gives the identity:
	\[
		\big(\bar{\Psi} \circ \bar{\Phi}\big)|_{\mathcal{A}} = \xi^{-1}\circ\xi = \id{\mathcal{A}} = \big(\id{\mathcal{C}}\big)|_{\mathcal{A}}
	\] 
	The essential uniqueness clause of local character for $\mathcal{C}$ gives a natural isomorphism $\bar{\Psi} \circ \bar{\Phi} \isom \id{\mathcal{C}}$. A symmetric argument then gives a natural isomorphism $\id{\mathcal{D}} \isom \bar{\Phi} \circ \bar{\Psi}$, showing that there is an adjoint equivalence in $\Theta$ between $\mathcal{C}$ and $\mathcal{D}$. Finally, consider another adjoint equivalence $F: \mathcal{C} \rightarrow \mathcal{D}$ and $G: \mathcal{D} \rightarrow \mathcal{C}$ such that $F|_{\mathcal{X}} = \xi$ and $G|_{\mathcal{B}} = \xi^{-1}$: by the essential uniqueness clause of local character, we immediately conclude that $F \isom \bar{\Phi}$ and $G \isom \bar{\Psi}$. 
\end{proof}
In principle, a theory $\mathcal{C}$ with local character can have many more systems than those freely generated by the $\otimes$-free sub-theory. Using our uniqueness result, however, we can show that the additional systems don't add anything essential to $\mathcal{C}$ (save from satisfying any structural requirements imposed by the choice of SMC-universe). 
\begin{proposition}  
	Let $(\Xi,\Theta)$ be a $\otimes$-free/interacting pair of universes. Let $\mathcal{C}$ be a SMC in $\Theta$ which has local character w.r.t. $(\Xi,\Theta)$ given by $\otimes$-free sub-category $\mathcal{A}$. Assume that $\mathcal{C}'$ is a reflective sub-SMC of $\mathcal{C}$ in $\Theta$
		\footnote{We intend the sub-SMC $\mathcal{C}'$, the inclusion $E_{\mathcal{C}'}:\mathcal{C}'\hookrightarrow \mathcal{C}$ and the retraction $R_{\mathcal{C}'}:\mathcal{C}\rightarrow \mathcal{C}'$ all to be in $\Theta$.} 
	such that $\langle\mathcal{A}\rangle_{\otimes} \hookrightarrow \mathcal{C}'$ in $\Theta$.
	Then $\mathcal{C}'$ also has local character w.r.t. $(\Xi,\Theta)$ given by $\mathcal{A}$, and the injection-retraction pair gives an adjoint monoidal equivalence of categories $\mathcal{C} \simeq \mathcal{C}'$ in $\Theta$.
\end{proposition}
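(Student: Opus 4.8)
The plan is to run the entire argument off the local character of $\mathcal{C}$ alone, extracting both the equivalence and the local character of $\mathcal{C}'$ from its essential-uniqueness clause. Write $E := E_{\mathcal{C}'}: \mathcal{C}' \hookrightarrow \mathcal{C}$ and $R := R_{\mathcal{C}'}: \mathcal{C} \rightarrow \mathcal{C}'$ for the monoidal inclusion and retraction in $\Theta$, so that $R \circ E = \id{\mathcal{C}'}$. The hypothesis $\langle \mathcal{A} \rangle_{\otimes} \hookrightarrow \mathcal{C}'$ guarantees that $\mathcal{A}$, and hence $\langle \mathcal{A} \rangle_{\otimes}$, sits inside $\mathcal{C}'$, so the inclusion $E_{\mathcal{A}}: \mathcal{A} \hookrightarrow \mathcal{C}$ factors through $E$. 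The single computation underpinning everything is that, since $R \circ E = \id{\mathcal{C}'}$ and $\mathcal{A} \subseteq \mathcal{C}'$, the restriction $R|_{\mathcal{A}}$ is just the inclusion $\mathcal{A} \hookrightarrow \mathcal{C}'$, whence $(E \circ R)|_{\mathcal{A}} = E_{\mathcal{A}}$ \emph{on the nose}.

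First I would establish the equivalence. Both $E \circ R$ and $\id{\mathcal{C}}$ are monoidal endofunctors of $\mathcal{C}$ in $\Theta$ restricting to $E_{\mathcal{A}}$ on $\mathcal{A}$, and $E_{\mathcal{A}}$ is a functor in $\Xi$ sending the tensor unit to the tensor unit, by the $\otimes$-free/interacting pair condition applied to $\mathcal{C}$. The essential-uniqueness clause of the local character of $\mathcal{C}$, applied to $E_{\mathcal{A}}: \mathcal{A} \rightarrow \mathcal{C}$ (taking $\mathcal{D} = \mathcal{C}$), then forces $E \circ R \isom \id{\mathcal{C}}$. Together with $R \circ E = \id{\mathcal{C}'}$ this exhibits the reflection $R \dashv E$ as an adjoint monoidal equivalence $\mathcal{C} \simeq \mathcal{C}'$ in $\Theta$, which is the second assertion of the proposition.

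It remains to show $\mathcal{C}'$ has local character given by $\mathcal{A}$. Since $E: \mathcal{C}' \rightarrow \mathcal{C}$ is now a monoidal equivalence acting as the identity on $\mathcal{A}$, it preserves and reflects non-zero-ness, non-unit-ness and $\otimes$-divisibility, and hence $\otimes$-primality of the objects of $\mathcal{A}$ and unique $\otimes$-factorisability of the objects of $\langle \mathcal{A} \rangle_{\otimes}$; product tomography is intrinsic to the SMC $\langle \mathcal{A} \rangle_{\otimes}$, which is literally the same subcategory inside $\mathcal{C}$ and inside $\mathcal{C}'$. So all three $\otimes$-free conditions survive and $\mathcal{A}$ is $\otimes$-free in $\mathcal{C}'$. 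For the lifting property, given $\mathcal{D} \in \obj{\Theta}$ and a unit-preserving $F: \mathcal{A} \rightarrow \mathcal{D}$ in $\Xi$, I would lift $F$ to $\bar{F}: \mathcal{C} \rightarrow \mathcal{D}$ by local character of $\mathcal{C}$ and set $\bar{F}\circ E: \mathcal{C}' \rightarrow \mathcal{D}$; as $E|_{\mathcal{A}}$ is the inclusion, this restricts to $F$, giving existence. For uniqueness, if $G_1, G_2: \mathcal{C}' \rightarrow \mathcal{D}$ both restrict to $F$, then $G_1 \circ R$ and $G_2 \circ R$ both restrict to $F$ on $\mathcal{A}$ (again using $R|_{\mathcal{A}} = \mathrm{incl}$), so the essential-uniqueness clause for $\mathcal{C}$ yields $G_1 \circ R \isom G_2 \circ R$; whiskering with $E$ and using $R \circ E = \id{\mathcal{C}'}$ gives $G_1 \isom G_2$.

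The main obstacle I anticipate is bookkeeping the strictness. The whole argument leans on the retraction identity $R \circ E = \id{\mathcal{C}'}$ holding on the nose, so that restrictions to $\mathcal{A}$ are genuine equalities rather than mere isomorphisms, as the lifting property literally demands; if the reflection is only available up to natural isomorphism one must first rectify it, or reformulate the essential-uniqueness clause to tolerate restriction-up-to-iso. The other point needing care is checking that $\otimes$-primality and unique factorisability are genuinely \emph{reflected}, and not merely preserved, by the monoidal equivalence $E$, so that the $\otimes$-free conditions transfer back to $\mathcal{C}'$ rather than only forward to $\mathcal{C}$.
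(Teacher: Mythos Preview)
Your proof is correct, but the route differs from the paper's in an interesting way. The paper argues in the opposite order: it first establishes local character of $\mathcal{C}'$ (existence via $\bar{F}\circ E$, uniqueness by pushing a competitor $G$ through $R$ and comparing to $\bar{F}$ via the essential-uniqueness clause of $\mathcal{C}$, exactly your argument), and only then derives the equivalence $\mathcal{C}\simeq\mathcal{C}'$ by invoking Theorem~\ref{thm_main} with $\xi=\id{\mathcal{A}}$ and then transporting the resulting equivalence onto the given $E,R$ via further natural isomorphisms. Your approach is more direct on the equivalence side: you get $E\circ R\isom\id{\mathcal{C}}$ in one shot from the essential-uniqueness clause of $\mathcal{C}$ applied to $E_{\mathcal{A}}$, bypassing Theorem~\ref{thm_main} entirely, and then use the equivalence to transfer the $\otimes$-free conditions on $\mathcal{A}$ to $\mathcal{C}'$ (a check the paper silently elides). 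What the paper's ordering buys is an illustration of how the proposition sits downstream of the main theorem; what yours buys is self-containment and a cleaner handling of the reflection structure. Your closing caveat about strictness of $R\circ E=\id{\mathcal{C}'}$ is well-placed: both arguments lean on this, and the paper uses it just as freely.
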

\begin{proof}
	We consider the inclusion $E_{\langle\mathcal{A}\rangle_{\otimes}}:\langle\mathcal{A}\rangle_{\otimes} \hookrightarrow \mathcal{C}$,
	the inclusion $E_{\langle\mathcal{A}\rangle_{\otimes},\mathcal{C}'}:\langle\mathcal{A}\rangle_{\otimes} \hookrightarrow \mathcal{C}'$, 
	the inclusion $E_{\mathcal{C}'}:\mathcal{C}'\hookrightarrow \mathcal{C}$ 
	and the retraction $R_{\mathcal{C}'}:\mathcal{C}\rightarrow \mathcal{C}'$; these are all functors in $\Theta$. We begin by showing that $\mathcal{C}'$ inherits local character from $\mathcal{C}$. 

	Given another SMC $\mathcal{D}$ in $\Theta$ and a monoidal functor $F: \mathcal{A} \rightarrow \mathcal{D}$ in $\Xi$, we consider the lifting $\bar{F}:\mathcal{C} \rightarrow \mathcal{D}$ given by local character of $\mathcal{C}$ and we construct the functor $\bar{F}\circ E_{\mathcal{C}'}: \mathcal{C}' \rightarrow \mathcal{D}$ in $\Theta$. The functor $\bar{F}\circ E_{\mathcal{C}'}$ is a lifting of $F$ because: 
	\[
		\big(\bar{F}\circ E_{\mathcal{C}'}\big)|_{\mathcal{A}} 
		= 
		\big(\bar{F}\circ E_{\mathcal{C}'}\circ E_{\langle\mathcal{A}\rangle_{\otimes},\mathcal{C}'}\big)|_{\mathcal{A}} 
		=
		\big(\bar{F}\circ E_{\langle\mathcal{A}\rangle_{\otimes}}\big)|_{\mathcal{A}}
		=
		\bar{F}|_{\mathcal{A}}
		=
		F
	\]
	This proves existence of a lifting from $F: \mathcal{A} \rightarrow \mathcal{D}$ to $\bar{F}\circ E_{\mathcal{C}'}: \mathcal{C}' \rightarrow \mathcal{D}$.

	To prove essential uniqueness of the lifting $\bar{F}\circ E_{\mathcal{C}'}$, let $G: \mathcal{C}' \rightarrow \mathcal{D}$ be another functor in $\Theta$ such that $G|_{\mathcal{A}} = F$, and consider the functor $G \circ R_{\mathcal{C}'} : \mathcal{C} \rightarrow \mathcal{D}$ in $\Theta$. This is a lifting of $F:  \mathcal{A} \rightarrow \mathcal{D}$ because:
	\[
		\big(G \circ R_{\mathcal{C}'}\big)|_{\mathcal{A}} 
		= 
		\big(G\circ R_{\mathcal{C}'} \circ E_{\mathcal{C}'} \big)|_{\mathcal{A}} 
		=
		G|_{\mathcal{A}}
		=
		F
	\]
	By the essential uniqueness clause for local character of $\mathcal{C}$, we get that there is a natural isomorphism $\varphi: G \circ R_{\mathcal{C}'} \stackrel{\isom}{\Rightarrow} \bar{F}$. This in turn gives a natural isomorphism $\varphi_{E_{\mathcal{C}'}} : G \circ R_{\mathcal{C}'} \circ E_{\mathcal{C}'} \stackrel{\isom}{\Rightarrow} \bar{F} \circ E_{\mathcal{C}'}$, and we conclude by observing that $G \circ R_{\mathcal{C}'} \circ E_{\mathcal{C}'} = G$.

	Having established that $\mathcal{A}$ gives local character to the sub-SMC $\mathcal{C}'$, we with to use Theorem \ref{thm_main} to show that $E_{\mathcal{C}'}:\mathcal{C}'\rightarrow\mathcal{C}$ and $R_{\mathcal{C}'}:\mathcal{C}\rightarrow \mathcal{C}'$ form an adjoint monoidal equivalence of categories $\mathcal{C} \simeq \mathcal{C}'$. We appeal to Theorem \ref{thm_main} to $\mathcal{C}$ and $\mathcal{D}:=\mathcal{C}'$, using $\xi := \id{\mathcal{A}}: \mathcal{A} \rightarrow \mathcal{A}$ as our chosen isomorphism, and we obtain an adjoint monoidal equivalence of categories $F: \mathcal{C}' \rightarrow \mathcal{C}$ and $G: \mathcal{C} \rightarrow \mathcal{C}'$ such that $F|_{\mathcal{A}} = \id{\mathcal{A}} = G|_{\mathcal{A}}$. By the essential uniqueness clauses for local character of $\mathcal{C}$ and $\mathcal{C}'$ respectively, we conclude that there are natural monoidal isomorphisms $\varphi: E_{\mathcal{C}'} \stackrel{\isom}\Rightarrow F$ and $\psi: R_{\mathcal{C}'} \stackrel{\isom}\Rightarrow G$. By adjoint equivalence we also get natural monoidal isomorphisms $\epsilon: F \circ G\stackrel{\isom}{\Rightarrow}\id{\mathcal{C}}$ and $\eta: \id{\mathcal{C}'}\stackrel{\isom}{\Rightarrow}G \circ F$. We can compose these natural monoidal isomorphisms horizontally (denoted by $\ast$) and vertically (denoted by $\cdot$) to obtain natural monoidal isomorphisms showing that $E_{\mathcal{C}'}$ and $R_{\mathcal{C}'}$ form an adjoint monoidal equivalence with co-unit $\epsilon \cdot (\psi \ast \varphi): E\circ R \stackrel{\isom}{\Rightarrow} \id{\mathcal{C}}$ and unit $(\varphi^{-1} \ast \psi^{-1})\cdot \eta: \id{\mathcal{C}'} \stackrel{\isom}{\Rightarrow} R \circ E$.
\end{proof}

%\bB \section{Interplay with tomographic locality} \e
%
%\bR ... XXX ... background and use in reconstruction ... role of it being a property ...\e
%	\TODOs{@Bob: I'm not sure what you mean to do here, so I'll leave it to you.}

\section{Examples}\label{section_applications} 

We now show local character for three large families of symmetric monoidal categories of interest: 
\begin{itemize}
	\item free finite-dimensional modules over a commutative semiring, showing that our new result generalises the result originally presented in \cite{Uniqueness2017};
	\item relations over a quantale, showing that our new result applies to infinite-dimensional examples;
	\item finitely generated modules over a principal ideal domain, showing that our new result applies to non-free examples;
\end{itemize}
Relations over quantales are an important class of examples: they are fundamental in the monoidal approach to topology~\cite{HofmannSealTholen2014} and they have recently found application in compositional models of language and cognition~\cite{CoeckeGenoveseLewisMarsdenToumi2018}. Intuitively the quantale values can be seen to describe quantities such as connection strengths, costs, distances and success probabilities, following ideas originally due to Lawvere~\cite{Lawvere1973}. Finitely generated $R$-modules are another important class of examples, connecting our result to the historic uniqueness results by Eilenberg and Watts \cite{Eilenberg1960,Watts1960}.
\begin{theorem}  
	\label{thm_SMat}
	Let $S$ be a commutative semiring, let $\Xi$ be the universe of categories enriched in $S$-modules and $S$-linear functors between them. Let $\Theta$ be the SMC-universe of SMCs enriched in $S$-modules and $S$-linear monoidal functors between them. The SMC $\RMatCategory{S}$ of free finite-dimensional modules over $S$ has local character w.r.t. $(\Xi,\Theta)$.
\end{theorem}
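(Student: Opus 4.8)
The plan is to exhibit a concrete $\otimes$-free sub-category $\mathcal{A}$ of $\RMatCategory{S}$ and to verify the lifting property by hand, the whole argument resting on the fact that the hom-modules of $\RMatCategory{S}$ are free over $S$ with a basis of ``product matrix units''. I fix notation: the objects of $\RMatCategory{S}$ are the modules $S^n$, a morphism $S^m\to S^n$ is an $n\times m$ matrix, composition is matrix multiplication, the tensor is the Kronecker product (so $S^m\otimes S^n\isom S^{mn}$) and the unit is $S^1$. I take $\mathcal{A}$ to be the full sub-category spanned by $S^1$ together with the $\otimes$-prime objects which, since dimension is multiplicative under $\otimes$, are exactly the prime-dimensional modules $S^p$. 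As a full sub-category of an $S$-module-enriched category, $\mathcal{A}$ lies in $\Xi$ and its inclusion is $S$-linear, and $(\Xi,\Theta)$ is readily seen to be a $\otimes$-free/interacting pair; so the task reduces to checking that $\mathcal{A}$ is $\otimes$-free and that $\RMatCategory{S}$ has the lifting property given by $\mathcal{A}$.

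For $\otimes$-freeness: condition (i) is the identification of the $\otimes$-primes with the prime-dimensional modules (Euclid's lemma); condition (ii), unique $\otimes$-factorisability of the objects of $\langle\mathcal{A}\rangle_{\otimes}$, is the fundamental theorem of arithmetic read off the dimensions; and condition (iii), product tomography of $\langle\mathcal{A}\rangle_{\otimes}$, holds because the entry of a Kronecker product $\otimes_j f_j$ at a multi-index is the product of the corresponding entries of the $f_j$, so that evaluating against product basis-states and basis-effects recovers every entry and two such products agreeing on all these scalars must be equal. (Note the sufficient conditions listed after the product-tomography definition are not available here, since over a general semiring a nonzero entry need not be invertible, so I verify tomography directly.)

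For the lifting property, let $\mathcal{D}\in\obj{\Theta}$ and let $F:\mathcal{A}\to\mathcal{D}$ be an $S$-linear functor sending $S^1$ to the unit of $\mathcal{D}$. The structural fact I would use is that every morphism decomposes as an $S$-linear combination of composites of tensor products of $\mathcal{A}$-morphisms: writing $f=\sum_{i,j}f_{ij}\,(\ket{i}\circ\bra{j})$ in the matrix-unit basis, each basis ket $\ket{i}:S^1\to S^n$ and bra $\bra{j}:S^m\to S^1$ factors, through the unique prime factorisations of $n$ and $m$, as a tensor product of states and effects of prime-dimensional objects, all of which live in $\mathcal{A}$. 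I then define $\bar F(S^n):=\boxtimes_{k=1}^r F(S^{p_k})$ for $n=\prod_{k=1}^r p_k$, and $\bar F(f):=\sum_{i,j}f_{ij}\,\big(\bar F(\ket{i})\circ\bar F(\bra{j})\big)$, where $\bar F(\ket{i})$ is the $\boxtimes$ of the $F$-images of the prime tensor factors of $\ket{i}$. This assignment is well-defined since the matrix entries $f_{ij}$ are uniquely determined, and $S$-linear by construction.

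Uniqueness is then immediate: any monoidal $S$-linear extension of $F$ must agree with $\bar F$ on objects, by unique factorisation and monoidality, and on each product matrix unit, by monoidality and functoriality, hence on all morphisms by $S$-linearity; this yields strict, and a fortiori essential, uniqueness. The substantive part---and the step I expect to be the main obstacle---is checking that $\bar F$ is genuinely a monoidal functor. Functoriality reduces to identities on products of matrix units, the crucial one being $\sum_i\bar F(\ket{i})\circ\bar F(\bra{i})=\boxtimes_k F\big(\sum_{i_k}\ket{i_k}\bra{i_k}\big)=\boxtimes_k F(\id_{S^{p_k}})=\id_{\bar F(S^n)}$, which uses the interchange law in $\mathcal{D}$ together with $F$ being $S$-linear and a functor on prime objects; composites of product bras and kets collapse, factor by factor, to the prime-level scalars supplied by $F$, and monoidality is verified the same way, matching the Kronecker structure of $\RMatCategory{S}$ with the $\boxtimes$-structure of $\mathcal{D}$ through the interchange law. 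All of these are finite bilinear manipulations once the product-matrix-unit basis and interchange law are in hand.
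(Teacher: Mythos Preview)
Your argument is correct and essentially the same as the paper's: the same $\otimes$-free sub-category of prime-dimensional objects, the same direct check of product tomography via matrix entries, and the same lifting built from the product matrix-unit decomposition with uniqueness forced by $S$-linearity plus monoidality. The only organisational difference is that the paper takes the objects of $\RMatCategory{S}$ to be arbitrary finite sets rather than the skeleton $\{S^n\}$, and therefore interposes an explicit retraction $R:\RMatCategory{S}\to\overline{\langle\mathcal{A}\rangle_\otimes}$ onto the maximal span before defining $\hat F$ there and setting $\bar F:=\hat F\circ R$; in your skeletal presentation that retraction is the identity and the step disappears. One small correction: for a general (non-strict) monoidal extension $G$ with $G|_{\mathcal{A}}=F$, the coherence data only give $G(S^n)\cong\bar F(S^n)$, not equality, so the conclusion is genuinely \emph{essential} uniqueness via a monoidal natural isomorphism (exactly the $G\eta$ the paper extracts from the retraction unit) rather than the strict uniqueness you claim.
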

\begin{proof}
	We define the $\otimes$-free sub-category $\mathcal{A}$ to be the full sub-category with objects in the form $A(p) := S^p = \oplus_{d=0}^{p-1} S$, where $p$ is a prime number. Because morphisms $A(p) \rightarrow A(q)$ are all the $S$-linear maps $S^p \rightarrow S^q$, the category $\mathcal{A}$ clearly lives in the $\otimes$-free universe $\Xi$. We write $\ket{a_0^{(p)}},...,\ket{a_{p-1}^{(p)}}$ for the standard orthonormal basis of states for $A(p)$ and $\bra{a_0^{(p)}},...,\bra{a_{p-1}^{(p)}}$ for the corresponding effects (such that $\braket{a_i^{(p)}}{a_j^{(p)}} = \delta_{ij}$). 

	The generic object of the freely interacting sub-SMC $\langle \mathcal{A} \rangle_{\otimes}$ takes the form $\otimes_{i=1}^n A(p_i)$---up to associators and unitors---and morphisms $\otimes_{i=1}^n A(p_i) \rightarrow \otimes_{j=1}^m A(q_i)$ are certain $S$-linear combinations of the following atomic morphisms:
	\[
		\Big(\bigotimes_{j=1}^m \ket{a_{h_j}^{(q_j)}}\Big) \circ \Big(\bigotimes_{i=1}^n \bra{a_{k_i}^{(p_i)}}\Big) 
	\]
	As a consequence, it is easy to check that the freely-interacting sub-SMC $\langle \mathcal{A} \rangle_{\otimes}$ is product tomographic, as required by the definition of a $\otimes$-free sub-category.

	We now consider the maximal span $\overline{\langle\mathcal{A}\rangle_{\otimes}}$ of $\mathcal{A}$ in $\RMatCategory{S}$, i.e. the full sub-SMC generated by the objects of $\langle\mathcal{A}\rangle_{\otimes}$. A generic object of $\RMatCategory{S}$ is a finite set $X$, and we consider the prime factorisation $\# X= \prod_{i=1}^{n_x} p_i^{(X)}$ of its cardinality---where some of the factors $p_i$ may be equal, and we write $1$ for the empty product---and we get an $S$-linear isomorphism $\eta_X: X \isom \otimes_{i=1}^{n_X} A(p_i^{(X)})$. Starting from the sets with prime cardinality, it is always possible to choose these isomorphisms in such a way that $\eta_{X\otimes Y} = \eta_{X} \otimes \eta_{Y}$. This can be used to define the following retraction $R: \RMatCategory{S} \rightarrow \overline{\langle\mathcal{A}\rangle_{\otimes}}$ for the sub-category injection $E:\overline{\langle\mathcal{A}\rangle_{\otimes}} \hookrightarrow \RMatCategory{S}$:
	\[
		\begin{array}{rcl}
			R(X) & := & \otimes_{i=1}^{n_X} A(p_i^{(X)})\\
			R(f:X \rightarrow Y) & := & \epsilon_Y \circ f \circ \epsilon_X^{-1}
		\end{array}
	\]
	The retraction $R$ is monoidal and $S$-linear, so it turns $\overline{\langle\mathcal{A}\rangle_{\otimes}}$ into a reflective sub-SMC of $\RMatCategory{S}$ in $\Theta$. Furthermore, the injection-retraction pair is an $S$-linear adjoint monoidal equivalence $\RMatCategory{S} \simeq \overline{\langle\mathcal{A}\rangle_{\otimes}}$ in $\Theta$. 

	We can therefore restrict our attention to the maximal span $\overline{\langle\mathcal{A}\rangle_{\otimes}}$, where the generic morphism $\otimes_{i=1}^n A(p_i) \rightarrow \otimes_{j=1}^m A(q_j)$ is a generic $S$-valued matrix:
	\[
		M := \sum_{k_1=0}^{p_1-1} ... \sum_{k_n=0}^{p_n-1}\sum_{h_1=0}^{q_1-1} ... \sum_{h_m=0}^{q_m-1} M_{k_1 ... k_n h_1 ... h_n} \Big(\otimes_{j=1}^m \ket{a_{h_j}^{(q_j)}}\Big) \circ \Big(\otimes_{i=1}^n \bra{a_{k_i}^{(p_i)}}\Big) 
	\]
	Given an SMC $\mathcal{D}$ enriched in $S$-modules (i.e. one in $\Theta$) and an $S$-linear functor $F: \mathcal{A} \rightarrow \mathcal{D}$ (i.e. one in $\Xi$), a lifting $\hat{F}:\overline{\langle\mathcal{A}\rangle_{\otimes}}\rightarrow \mathcal{D}$ can be defined as follows:
	\[
		\hat{F}(M) := \sum_{k_1=0}^{p_1-1} ... \sum_{k_n=0}^{p_n-1}\sum_{h_1=0}^{q_1-1} ... \sum_{h_m=0}^{q_m-1} M_{k_1 ... k_n h_1 ... h_n} \Big(\boxtimes_{j=1}^m F\big[\ket{a_{h_j}^{(q_j)}}\big]\Big) \circ \Big(\boxtimes_{i=1}^n F\big[\bra{a_{k_i}^{(p_i)}}\big]\Big)
	\]
	It is easy to check that the functor $\hat{F}$ is well-defined and restricts to $F$ on $\mathcal{A}$. The functor $\hat{F}$ is also $S$-linear and monoidal (i.e. in $\Theta$), so we can extend it to an $S$-linear monoidal functor $\bar{F}:=\hat{F} \circ R:\RMatCategory{S}\rightarrow \mathcal{D}$ to proving existence of a lifting $F \mapsto \bar{F}$. 

	The proof of essential uniqueness of the lifting $\bar{F}$ goes as follows. If $G: \RMatCategory{S}\rightarrow \mathcal{D}$ is an $S$-linear monoidal functor such that $G|_{\mathcal{A}} = F$, then by $S$-linearity we necessarily have that $G \circ E = \hat{F}$, from which it follows that $G \circ E \circ R = \hat{F}\circ R = \bar{F}$. From the monoidal natural isomorphism $\eta: \id{\RMatCategory{S}} \stackrel{\isom}{\Rightarrow} E \circ R$ we finally get the desired monoidal natural isomorphism $G \eta: G \stackrel{\isom}{\Rightarrow} G \circ E \circ R = \bar{F}$.
\end{proof}
The proof of local character for $\RMatCategory{S}$ doesn't make any explicit use of dimensional rigidity, a key ingredient of the original proof that prevented its extension to infinite-dimensional and non-free settings. With some tweaking, we can now extend our proof to categories of relations over quantales (which are infinite-dimensional) and to certain categories of modules over semirings (which are non-free). Theorem \ref{thm_main} can then be invoked to conclude that those categories have an essentially unique symmetric monoidal structure with local character within their relevant SMC-universe.
\begin{theorem}  
	\label{thm_RelQ}
	Let $Q$ be a quantale, let $\Xi$ be the universe of categories enriched in $Q$-modules
		\footnote{By which we mean complete join semilattices with an action of $Q$.}
	and continuous $Q$-linear functors between them. 
		% \TODOs{@Dan: I think you need to say continuous, otherwise there is no guarantee that they will respect the infinitary joins of the semilattices. Am I mistaken?}
	Let $\Theta$ be the SMC-universe of SMCs enriched in $Q$-modules and continuous $Q$-linear monoidal functors between them. The SMC $\RelQCategory{Q}$ of $Q$-valued relations\footnote{Restricted to sets smaller than some suitably large infinite cardinal $\aleph_{\beta}$.} has local character w.r.t. $(\Xi,\Theta)$.
\end{theorem}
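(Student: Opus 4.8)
The plan is to mirror the proof of Theorem~\ref{thm_SMat} for the finite fragment and to supply a genuinely new argument, powered by continuity, for the infinite objects. First I would take the $\otimes$-free sub-category $\mathcal{A}$ to be the full sub-category of $\RelQCategory{Q}$ spanned by the tensor unit together with the sets $A(p)$ of prime cardinality $p$; since $\Hom{}{A(p)}{A(q)}$ is the full $Q$-module $Q^{p\times q}$ of $Q$-valued relations, $\mathcal{A}$ lives in $\Xi$. Exactly as in Theorem~\ref{thm_SMat}, the minimal span $\langle\mathcal{A}\rangle_\otimes$ then consists of all \emph{finite} sets, its morphisms are $Q$-linear combinations of the atomic relations $\ket{y}\circ\bra{x}$, unique $\otimes$-factorisability is the fundamental theorem of arithmetic, and product tomography holds because these atomic states and effects separate relations between finite sets; hence $\mathcal{A}$ is $\otimes$-free. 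It is essential here that the infinite cardinals are \emph{not} admitted into $\mathcal{A}$: they fail unique factorisation outright, since $\aleph_0 \otimes \aleph_0 \isom \aleph_0$.

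Given a continuous $Q$-linear $F:\mathcal{A}\rightarrow\mathcal{D}$ fixing the unit, I would build the lifting $\bar{F}$ using the biproduct structure of $\RelQCategory{Q}$, where $\bigoplus_{i}X_i$ is disjoint union and the tensor distributes over it. Every object is then a (possibly infinite) biproduct $X\isom\bigoplus_{x\in X}\tensorUnit$ of copies of the unit, and every relation decomposes as the quantale join $R=\bigvee_{x,y}R(x,y)\cdot(\ket{y}\circ\bra{x})$ of its scaled matrix units. I would therefore set $\bar{F}(X):=\bigoplus_{x\in X}\tensorUnit_{\mathcal{D}}$ and $\bar{F}(R):=\bigvee_{x,y}F(R(x,y))\cdot\big(\bar{F}(\ket{y})\circ\bar{F}(\bra{x})\big)$, where the biproduct injections and projections in $\mathcal{D}$ furnish $\bar{F}(\ket{y})$ and $\bar{F}(\bra{x})$. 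On the finite fragment this agrees with the tensor-based definition $\bar{F}(\otimes_i A(p_i)):=\boxtimes_i F(A(p_i))$ of Theorem~\ref{thm_SMat}, because $F$ preserves the finite biproduct $A(p)\isom\tensorUnit^{\oplus p}$ and distributivity gives $\boxtimes_i F(A(p_i))\isom\tensorUnit_{\mathcal{D}}^{\oplus|X|}$.

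It then remains to check that $\bar{F}$ is a well-defined continuous $Q$-linear monoidal functor restricting to $F$. Functoriality reduces to respecting composition of relations, i.e. matrix multiplication, which is a quantale join of products; this comes down to interchanging the defining joins and is licensed by continuity and $Q$-linearity of composition in $\mathcal{D}$. Monoidality follows from the distributivity of $\boxtimes$ over the biproducts $\bigoplus_x\tensorUnit_{\mathcal{D}}$, continuity and $Q$-linearity being immediate from the construction. For essential uniqueness I would argue that any $G$ in $\Theta$ with $G|_{\mathcal{A}}=F$ must coincide with $\bar{F}$ on the finite fragment by the $Q$-linearity/monoidality argument of Theorem~\ref{thm_SMat}, and must then coincide on every infinite object as well, because a continuous $Q$-linear functor necessarily preserves the infinite biproducts $\bigoplus_x\tensorUnit$ and the infinite joins presenting each relation; this yields a natural monoidal isomorphism $G\isom\bar{F}$.

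The main obstacle I anticipate lives entirely in the passage to infinite objects. One must guarantee that the infinite biproducts $\bigoplus_{x}\tensorUnit_{\mathcal{D}}$ actually exist in $\mathcal{D}$ and are preserved, which is exactly where the completeness of the $Q$-module enrichment and the continuity of the functors in $(\Xi,\Theta)$ are indispensable, and is the precise reason the statement would fail without the continuity hypothesis. One must also verify that all manipulations of infinite quantale joins are valid---in particular the interchange of joins underlying functoriality and the distributivity of the tensor over infinite biproducts underlying monoidality. By contrast, the $\otimes$-free axioms and the finite part of the lifting never see the infinite cardinals and carry over from Theorem~\ref{thm_SMat} essentially unchanged.
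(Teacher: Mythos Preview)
Your approach diverges from the paper's at the very first step. The paper does \emph{not} restrict $\mathcal{A}$ to the finite primes: it explicitly enlarges $\mathcal{A}$ to contain all infinite initial ordinals $\omega_\alpha$ (for $\alpha<\beta$) alongside the finite prime ordinals, and then simply replaces the finite sums $\sum_{k=0}^{p-1}$ of the $\RMatCategory{S}$ proof by quantale joins $\bigvee_{k<\omega_\alpha}$. With that choice every object of $\RelQCategory{Q}$ already lies in the maximal span $\overline{\langle\mathcal{A}\rangle}_\otimes$, so the lifting $\bar F$ is defined on infinite sets exactly as on finite ones, via the matrix formula, and the value $\bar F(\omega_\alpha)=F(\omega_\alpha)$ is handed to you directly by the data of $F$. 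No biproducts in the target category are ever needed. Your route instead keeps $\mathcal{A}$ finite and tries to manufacture $\bar F(X)$ for infinite $X$ as an infinite biproduct $\bigoplus_{x\in X}\tensorUnit_{\mathcal{D}}$.

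That is where your proposal has a genuine gap. Nothing in the universe $\Theta$---SMCs enriched in $Q$-modules with continuous $Q$-linear monoidal functors---guarantees that an arbitrary $\mathcal{D}\in\obj{\Theta}$ possesses biproducts of the required (possibly large infinite) size, let alone that $\boxtimes$ distributes over them. $Q$-module enrichment gives completeness of \emph{hom-objects}, not colimits in $\mathcal{D}$ itself, and continuity of functors is a condition on their action on hom-sets, not a preservation statement for (co)limits that may not exist. So the object $\bigoplus_{x\in X}\tensorUnit_{\mathcal{D}}$ on which your entire construction of $\bar F(X)$ rests may simply fail to exist, and the ``obstacle'' you flag in your last paragraph is in fact fatal to the argument as stated. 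You do raise a legitimate worry in the opposite direction---admitting the $\omega_\alpha$ into $\mathcal{A}$ sits uneasily with unique $\otimes$-factorisation, since $\omega_\alpha\otimes\omega_\alpha\isom\omega_\alpha$---and the paper's two-line sketch does not engage with this point; but that is a lacuna in the paper's argument, not a licence to assume structure on $\mathcal{D}$ that the hypotheses do not provide.
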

\begin{proof}
	The proof is the same as for the previous result, save for the following variations: 
	\begin{enumerate}
		\item[(i)] instead of restricting our attention to finite prime ordinals $\{0,...,p-1\}$, we also include all infinite initial ordinals $\omega_\alpha$;\footnote{Because of our restriction on the cardinality of sets, these are exactly all infinite initial ordinals $\omega_\alpha$ with $\alpha<\beta$.}
		\item[(ii)] instead of using $\sum\limits_{k=0}^{p-1}$, we use $\bigvee\limits_{k = 0}^{p-1}$ for finite ordinals and $\bigvee\limits_{k < \omega_{\alpha}}$ for infinite initial ordinals.
	\end{enumerate}
\end{proof}
Save for the change from finitary to infinitary operations, the proof for the category of relations over a quantale still takes place in a free setting, where morphisms are matrices. The move to a non-free setting instead requires some additional sophistication, so the proof below is presented in full detail.
\begin{theorem}  
	Let $R$ be a principal ideal domain, let $\Xi$ be the universe of categories enriched in $R$-modules and $R$-linear functors between them. Let $\Theta$ be the SMC-universe of SMCs enriched in $R$-modules and $R$-linear monoidal functors between them. The SMC $\RModCategory{R}_{fg}$ of finitely generated $R$-modules has local character w.r.t. $(\Xi,\Theta)$.
\end{theorem}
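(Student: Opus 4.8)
The plan is to run the proof of Theorem~\ref{thm_SMat} as far as it will go and then propagate the lifting across the torsion modules using the structure theory of the PID. I would keep the same $\otimes$-free subcategory $\mathcal{A}$, namely the full subcategory on the tensor unit $R$ together with the free modules $R^p = \oplus_{d=0}^{p-1} R$ of prime rank $p$. Since the tensor product of free modules again multiplies ranks, the minimal span $\langle\mathcal{A}\rangle_\otimes$ is exactly the category of free finitely generated modules, its objects factor uniquely into the generators $R^p$, and it is product tomographic for precisely the reason given in the free case (morphisms are matrices, read off by the standard basis states $\ket{a_k^{(p)}}$ and effects $\bra{a_k^{(p)}}$). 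Thus $\mathcal{A}$ is $\otimes$-free and, by Theorem~\ref{thm_SMat} applied to $R$ viewed as a commutative semiring, any $R$-linear $F:\mathcal{A}\to\mathcal{D}$ preserving the unit already lifts essentially uniquely to an $R$-linear monoidal $\hat{F}:\overline{\langle\mathcal{A}\rangle_\otimes}\to\mathcal{D}$ on the free part.

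The genuinely new step is to extend $\hat{F}$ from the free modules to all of $\RModCategory{R}_{fg}$. Here I would use that over a PID every finitely generated module $M$ admits a finite free presentation $R^m \xrightarrow{D} R^n \to M \to 0$, with $D$ in Smith normal form, so that $M = \operatorname{coker}(D)$; likewise every morphism $f:M\to N$ lifts to a chain map between chosen presentations, uniquely up to chain homotopy. I would then set $\bar{F}(M) := \operatorname{coker}(\hat{F}(D))$ and define $\bar{F}(f)$ as the induced map on cokernels. Monoidality of $\bar{F}$ would then be obtained from right-exactness of $\otimes$: the presentation of $M\otimes N$ is assembled from those of $M$ and $N$, and $\hat F$ being monoidal on the free part propagates the coherence isomorphisms to the cokernels.

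For essential uniqueness, any $R$-linear monoidal $G:\RModCategory{R}_{fg}\to\mathcal{D}$ restricting to $F$ on $\mathcal{A}$ must, by the uniqueness clause of Theorem~\ref{thm_SMat}, agree with $\hat F$ on the free part up to monoidal natural isomorphism; since $G$ respects the defining cokernels, it must then agree with $\bar F$ on every $M=\operatorname{coker}(D)$, yielding a natural monoidal isomorphism $G\cong\bar F$.

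I expect the main obstacle to be the well-definedness of $\bar F$ on morphisms, i.e.\ the verification that the construction is insensitive to the chosen free presentations. This is where the non-free setting bites: unlike the free case there is no canonical basis, a module is only presented up to the ambiguity recorded by the automorphisms of its Smith normal form, and one must check that $\hat F$ annihilates the resulting chain homotopies on passing to cokernels. A secondary subtlety is ensuring that the cokernels used to define $\bar F$ genuinely exist and are preserved in the target $\mathcal{D}$; this is where the precise amount of structure packaged into the SMC-universe $\Theta$ (cocompleteness, and right-exactness of the $R$-linear functors in play) has to be pinned down, in the spirit of the Eilenberg--Watts theorems referenced above.
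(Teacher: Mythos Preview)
Your approach diverges from the paper's in the choice of $\otimes$-free subcategory, and this is where the gap lies. You take $\mathcal{A}$ to contain only the free modules $R^p$ of prime rank, and then propose to reach the torsion modules by forming cokernels of free presentations inside the target $\mathcal{D}$. But the SMC-universe $\Theta$ in the theorem statement is fixed: it consists of SMCs enriched in $R$-modules with $R$-linear monoidal functors, and nothing more. There is no cocompleteness hypothesis on $\mathcal{D}$ and no right-exactness hypothesis on the functors in $\Theta$, so $\operatorname{coker}(\hat F(D))$ need not exist, and even when it does, an arbitrary $R$-linear monoidal $G$ restricting to $F$ on $\mathcal{A}$ has no reason to send $\operatorname{coker}(D)$ to $\operatorname{coker}(G(D))$. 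Both halves of the lifting property therefore fail: existence, because $\bar F$ may not be definable; and essential uniqueness, because your argument ``since $G$ respects the defining cokernels'' assumes exactly what $R$-linearity does not give. What you describe is essentially the Eilenberg--Watts set-up, which is a different (stronger) theorem with a different $\Theta$.

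The paper sidesteps this entirely by enlarging $\mathcal{A}$ to include the torsion part: its $\otimes$-free subcategory consists of those finite direct sums $\bigoplus_i R/(r_i)$ (with $(r_i)$ primary) that are not expressible as a nontrivial tensor product. With this choice the maximal span already covers all of $\RModCategory{R}_{fg}$ up to equivalence, and the lifting is produced purely from the $R$-module enrichment: one uses the universal property of the tensor product of hom-modules together with the identification $\bigotimes_i \Hom{}{A_i}{B_i} \cong \Hom{}{\bigotimes_i A_i}{\bigotimes_i B_i}$ in $\RModCategory{R}_{fg}$ to extend $F$ to $\hat F$ on $\overline{\langle\mathcal{A}\rangle_\otimes}$, with no colimits required in $\mathcal{D}$. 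The moral is that in the non-free case the $\otimes$-free subcategory must already see the torsion; it cannot be manufactured after the fact inside an arbitrary $R$-linear $\mathcal{D}$.
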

\begin{proof}
	The proof is conceptually the same given above for the free finite-dimensional $R$-modules, but there are a number of technical variations that need to be carefully spelled out.

	Our modules are no longer free, so the standard orthonormal basis cannot be used to express morphisms as matrices any longer. Instead, we invoke the structure theorem for finitely generated modules over a principal ideal domain to decompose any object $M$ of $\RModCategory{R}_{fg}$ as a finite direct sum of cyclic modules in the following form, where $(r_i^{(M)})$ are all primary ideals (equivalently, $R/(r_i^{(M)})$ are all indecomposable $R$-modules):
		\[
			M \isom \bigoplus_{i=1}^{\dim{M}} R/(r_i^{(M)}) 
		\]
	When objects are decomposed in this form, the tensor product can be written as follows, where the element $\gcd(r_i^{(M)},r_j^{(N)})$ is a generator of the sum ideal $(r_i^{(M)})+(r_j^{(N)})$ in the principal ideal domain $R$:
	\[
		\bigg(\bigoplus_{i=1}^{\dim{M}} R/(r_i^{(M)})\bigg) \otimes \bigg(\bigoplus_{j=1}^{\dim{N}} R/(r_j^{(N)})\bigg) 
		=
		\bigoplus_{i=1}^{\dim{M}} \bigoplus_{j=1}^{\dim{N}} R/\big(\gcd(r_i^{(M)},r_j^{(N)})\big)
	\]

	While the dimension $\dim{M}$ of an object is well-defined, it is no longer enough to take objects with prime dimension, since the dimension of the tensor product is not in general the product of dimensions. For example, for $R=\integers$ we get that the following tensor product of two 2-dimensional objects is 1-dimensional:
	\[
		\begin{array}{rcl}
		\integers/(6) \otimes \integers/(15)
		&=&
		\big(\integers/(2)\oplus\integers/(3)\big) \otimes \big(\integers/(3) \oplus \integers/(5)\big) 
		\\
		&=& 
		\big(\integers/(2) \otimes \integers/(3)\big)
		\oplus
		\big(\integers/(2) \otimes \integers/(5)\big)
		\oplus
		\big(\integers/(3) \otimes \integers/(3)\big)
		\oplus
		\big(\integers/(3) \otimes \integers/(5)\big)
		\\
		&=&
		\integers/(1) 
		\oplus
		\integers/(1) 
		\oplus
		\integers/(3) 
		\oplus
		\integers/(1) 
		\\
		&=&
		\integers/(3)
		\end{array}
	\]
	As the objects of the $\otimes$-free sub-category $\mathcal{A}$ we simply take the $R$-modules in the form $A(r_1,...,r_d):=\bigoplus_{i=1}^{d} R/(r_i)$ which cannot be written as tensor products.

	%In place of the standard orthonormal basis states that we would consider for a free finite-dimensional $R$-module, we consider the states $\ket{a_i^{(r_1,...,r_d)}}: R \rightarrow A(r_1,...,r_d)$ defined by $1 \mapsto 1_{R/(r_i)}$, where $1_{R/(r_i)}$ is a generator of the cyclic $R$-module $R/(r_i)$. Dually, we consider the effects $\bra{a_i^{(r_1,...,r_d)}}: A(r_1,...,r_d) \rightarrow R$ defined by $x \cdot 1_{R/(r_j)} \mapsto x \cdot\delta{ij}$. 

	Now we consider a category $\mathcal{D}$ enriched in $R$-modules and an $R$-linear functor $F: \mathcal{A} \rightarrow \mathcal{D}$. Given two families $A_1,...,A_n$ and $B_1,...,B_n$ of objects of $\mathcal{A}$, we look at the $R$-modules $\mathcal{A}[A_i,B_i]$ and we define a multi-linear map $\prod_{i=1}^n \mathcal{A}[A_i,B_i] \rightarrow \mathcal{D}[\boxtimes_{i=1}^n F(A_i),\boxtimes_{i=1}^n F(B_i)] $ as follows:
	\[
		(f_1,...,f_n) \mapsto F(f_1) \boxtimes ... \boxtimes F(f_n)
	\]
	By the universal property of the tensor product of $R$-modules, this lifts to a unique $R$-module homomorphim $\hat{F}: \otimes_{i=1}^n \mathcal{A}[A_i,B_i] \rightarrow\mathcal{D}[\boxtimes_{i=1}^n F(A_i),\boxtimes_{i=1}^n F(B_i)] $. We observe that the tensor product of two homsets and the homset for the tensor product of domains/codomains coincide in $\RModCategory{R}_{fg}$, so that we have the identification $\otimes_{i=1}^n \mathcal{A}[A_i,B_i] = \RModCategory{R}_{fg}[\otimes_{i-1}^n A_i, \otimes_{i=1}^n B_i]$.

	We can take all these homomorphisms $\RModCategory{R}_{fg}[\otimes_{i-1}^n A_i, \otimes_{i=1}^n B_i] \rightarrow \mathcal{D}[\boxtimes_{i=1}^n F(A_i),\boxtimes_{i=1}^n F(B_i)] $ together to obtain a unique functor $\hat{F}:\overline{\langle\mathcal{A}\rangle_{\otimes}} \rightarrow \mathcal{D}$ which restricts to $F: \mathcal{A} \rightarrow \mathcal{D}$ over $\mathcal{A}$: monoidality of the resulting $\hat{F}$ is clear by construction, and functoriality follows from the fact that all morphisms in $\RModCategory{R}_{fg}[\otimes_{i-1}^n A_i, \otimes_{i=1}^n B_i]$ are $R$-linear combinations of separable ones in the form $f_1 \otimes... \otimes f_n$, over which $\hat{F}$ is automatically functorial.

	Finally, the same reasoning given above for free finite-dimensional $R$-modules can be used to lift $\hat{F}$ to an essentially unique $\bar{F}: \RModCategory{R}_{fg} \rightarrow \mathcal{D}$, completing our proof.
\end{proof}

The three examples presented above are all linear in nature, but this doesn't mean that more classical, non-linear examples are excluded from their reach. Indeed, the Cartesian SMC $\fSetCategory$ of finite sets and functions is a sub-SMC of $\RMatCategory{S}$ for all commutative semirings $S$, and the larger Cartesian SMC $\SetCategory$ of (suitably small) sets and functions is a sub-SMC of $\RelQCategory{Q}$ for all quantales $Q$. 

It is tempting to think that the notion of local character should straightforwardly apply to the Cartesian setting: after all, Cartesian SMCs are seen as modelling minimally interacting theories. However, some care should be taken in defining what exactly should be Cartesian in a theory: minimal interaction is a physical property, so the correct requirement in this context should be for the tensor product to be Cartesian \emph{in the sub-SMC of physical/normalised states and processes}. More freedom can be granted to the parent SMC which contains the building blocks used to understand the physical processes, and this freedom is extremely important from an operational perspective: requiring the whole category to be Cartesian would mean that one does not have enough effects to test properties of systems.

\newcommand{\pFunCategory}{\operatorname{pFun}}
\newcommand{\fpFunCategory}{\operatorname{fpFun}}
By themselves, the categories $\fSetCategory$ and $\SetCategory$ don't have enough effects to allow an operational interpretation as the one advocated above, and they cannot satisfy product tomography. In particular, this means that we cannot expect them to have local character. In all the linear contexts presented above, however, the SMCs $\fSetCategory$ and $\SetCategory$ arise naturally as the normalised sub-SMCs of $\fpFunCategory$ and $\pFunCategory$, the sub-SMCs of (finite) sets and \emph{partial} functions between them, equipped with the environment structure given by the total functions to the singleton.\footnote{Note that the tensor product on $\pFunCategory$ is not the Cartesian one, but the one inherited from the Kronecker product of matrices.} This larger context of partial functions \emph{does} have enough effects to test all properties of sets, and it is the smallest one to do so: as a consequence, it is interesting to ask the question whether $\fpFunCategory$ and $\pFunCategory$---rather than $\fSetCategory$ and $\SetCategory$---have local character. The traditional context of investigation for sets and partial functions is that of categories enriched in pointed DCPOs, so that is the one with consider first by sketching the following tentative result (leaving further investigation to future work).

\begin{conjecture}
	Let $\Xi$ be the universe of categories enriched in pointed DCPOs and Scott-continuous functors between them, respecting finite coproducts and initial objects. Let $\Theta$ be the universe of SMCs enriched in pointed DCPOs and Scott-continuous monoidal functors between them, respecting finite coproducts and initial objects. The sub-category of finite prime ordinals with total functions between them gives $\fpFunCategory$ local character w.r.t. $(\Xi,\Theta)$. The sub-category of finite prime ordinals and infinite initial ordinals with total functions between them gives $\pFunCategory$ local character w.r.t. $(\Xi,\Theta)$.
\end{conjecture}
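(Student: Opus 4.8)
The plan is to follow the template of Theorem~\ref{thm_SMat}, taking as the $\otimes$-free sub-category $\mathcal{A}$ the full sub-category on the tensor unit $\tensorUnit$ together with the finite prime ordinals (and, for $\pFunCategory$, the infinite initial ordinals), with each homset carrying the flat pointed-DCPO structure whose bottom $\bot$ is the nowhere-defined partial function. Since the tensor on $\pFunCategory$ is the Kronecker product, cardinalities multiply, so the prime ordinals are $\otimes$-prime and every object of $\langle\mathcal{A}\rangle_{\otimes}$ is uniquely $\otimes$-factorisable by the unique prime factorisation of its cardinality; these conditions carry over verbatim. I would then build the retraction $R$ onto the maximal span $\overline{\langle\mathcal{A}\rangle_{\otimes}}$ exactly as in the $\RMatCategory{S}$ case, using a coherent choice of isomorphisms $X \isom \bigotimes_i A(p_i^{(X)})$ induced by the prime factorisation of $\#X$, so that $\overline{\langle\mathcal{A}\rangle_{\otimes}}$ becomes a reflective sub-SMC of $\pFunCategory$ in $\Theta$.

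For the lifting, given $F:\mathcal{A}\rightarrow\mathcal{D}$ in $\Xi$ I would define $\hat{F}$ on $\overline{\langle\mathcal{A}\rangle_{\otimes}}$ object-wise by monoidality, and morphism-wise by decomposing each finite set as a finite coproduct of copies of $\tensorUnit$: a partial function is then a copairing of partial functions out of $\tensorUnit$, each of which is either a coproduct injection (a total map, on which $F$ is given) or the bottom $\bot$. Scott-continuity and pointedness force $\hat{F}(\bot)=\bot_{\mathcal{D}}$, while preservation of finite coproducts and initial objects fixes $\hat{F}$ on every copairing; the resulting $\hat{F}$ is monoidal, and $\bar{F}:=\hat{F}\circ R$ is the sought lifting. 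Essential uniqueness would then follow as in Theorem~\ref{thm_SMat}: any monoidal $G$ with $G|_{\mathcal{A}}=F$ must agree with $\hat{F}$ on singletons, coproduct injections and $\bot$, hence on all partial functions by Scott-continuity and coproduct-preservation.

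The hard part---and the reason this is stated only as a conjecture---will be verifying product tomography of $\langle\mathcal{A}\rangle_{\otimes}$, which is precisely where the choice between total and partial functions bites. The morphisms of the minimal span are tensors of total maps, and the only effect $B \rightarrow \tensorUnit$ among total functions is the unique total map $!_B$; since $!_B \circ f \circ a$ equals the ``defined'' scalar for every total $f$ and every element-selecting state $a$, and equals the bottom scalar when $a=\bot$, no pair of distinct total maps can be separated, so product tomography of the naive total-function span fails. The crux is therefore to show that enough separating effects---ideally the single-point partial functions $B \rightharpoonup \tensorUnit$---are genuinely recovered inside the pointed-DCPO-enriched span from the ambient environment structure, so that the interplay of the additive (coproduct) structure, the multiplicative (Kronecker) structure and the directed joins yields a product-tomographic theory.

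For $\pFunCategory$ the infinite case adds a second difficulty: an infinite initial ordinal is not a finite coproduct of copies of $\tensorUnit$, so the decomposition underpinning $\hat{F}$ must instead reconstruct infinite partial functions as directed joins of their finite-support restrictions, reconciling finite-coproduct preservation with Scott-continuity. Once these two points are settled, Theorem~\ref{thm_main} applies and delivers the essential uniqueness of the symmetric monoidal structure with local character.
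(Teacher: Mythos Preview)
Your sketch follows essentially the same route as the paper's own proof sketch: both mirror the argument for $\RMatCategory{S}$ and $\RelQCategory{Q}$, with DCPO enrichment and Scott-continuity standing in for linear structure, and both rely on coproduct/initial-object preservation to reduce partial functions to total data plus the zero map. The paper phrases the key reduction slightly more coarsely---``every partial function is a disjoint union of a total function and a zero function''---whereas you decompose all the way down to singletons, but this is the same mechanism; your explicit identification of product tomography and of the infinite-ordinal directed-join issue as the genuine obstacles matches the concerns the authors raise in the text surrounding the conjecture.
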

\begin{proof}(sketch)
	The proof should essentially be analogous to the proofs previously given for the free finite-dimensional case of $\fRelCategory$ and the free infinite-dimensional case of $\RelCategory$, with the DCPO structure and Scott-continuity of functors playing the role that linear structure and linearity of functors played in the original proofs. Furthermore, every partial function is a disjoint union of a total function and a zero function, so the requirement that functors preserve coproducts and initial objects can be used to reduce their definition on partial functions to their definition on total functions.
\end{proof}

The formulation of the above tentative result suggests that pointed DCPO structure might not quite be enough to provide local character to $\fpFunCategory$ and $\pFunCategory$, so we are inspired to look at the problem from a slightly different angle. We note that $\fpFunCategory$ is a sub-SMC of $\RMatCategory{\mathbb{B}}$ for the boolean semiring $\mathbb{B}$, and that $\pFunCategory$ is a sub-SMC of $\RelQCategory{\mathbb{B}}$. The $\mathbb{B}$-module enrichment fails on those sub-categories, but only in the sense that the additive operation $\vee$ fails to be defined on all pairs: as long as we can appropriately deal with partial addition---a challenge in itself, given the existence of non-trivial interactions with function composition and the zero partial function---extensions of Theorems \ref{thm_SMat} and \ref{thm_RelQ} should be rather straightforward. Investigation of how this could be best achieved is also left to future work.

\section{Conclusions and Future Work}
\label{section_conclusions}

We have defined a new notion of symmetric monoidal category with \emph{local character}, based on the intuition that certain interacting process theories are fully described, within an appropriate categorical context, by some $\otimes$-free sub-theory. As our central contributions, we have proven that symmetric monoidal structure with local character is essentially unique when it exists, and that large families of categorical examples of interest are covered by our result. In particular, we managed to include the infinite-dimensional case of categories of relations over quantales---of interest in the monoidal approach to topology and in the study of compositional distributional models of meaning---and the non-free case of finitely-generated modules over principal ideal domains---bringing us closer to a different and well-established uniqueness result by Eilenberg and Watts. We have also sketched a proof that our framework covers the Cartesian setting of sets and (partial) functions, although further investigation of the matter was left to future work.

While the setting presented in this work spans a rather wide spectrum of categorical examples, a number of questions are left open. Firstly, our uniqueness result frames local character as a sufficient condition for uniqueness of symmetric monoidal structure, but does not provide any indication of whether is is also necessary, or how much space might lie between it and a suitable necessary condition. Secondly, the appearance of the universal property for the tensor product in the proof of local character for finite-dimensional $R$-modules over a principal ideal domain suggests that a much more general result proving local character for linear symmetric monoidal categories could be formulated. Finally, the proof of local character for categories of relations over a quantale could likely be extended to categories with non-idempotent infinitary algebraic operations, perhaps from a suitable topos-theoretic perspective. There is also an open question about whether $\otimes$-free subcategories are unique or natural in an appropriate sense, and if so under which conditions.     

\bibliographystyle{eptcs}
\bibliography{biblio}
\nocite{*}

\end{document}